\newtheorem{theo}{Theorem}
\newtheorem{lemma}{Lemma}
\newtheorem{observation}{Observation}
\newtheorem{definition}{Definition}
\newcommand{\remove}[1]{}
\newcommand{\arc}[1]{{%
  \setbox9=\hbox{#1}%
  \ooalign{\resizebox{\wd9}{\height}{\texttoptiebar{\phantom{A}}}\cr#1}}}
\title{Approximation Algorithms for Road  Coverage Using Wireless Sensor Networks for Moving Objects Monitoring}
\author{

\IEEEauthorblockN{Dinesh Dash\\}
\IEEEauthorblockA{National Institute of Technology Patna, India. }

}
\date{}
\begin{document}

\IEEEtitleabstractindextext{%
\begin{abstract}

%%%%%%%%%%%%%%%%%% Dinesh NITP %%%%%%%%%%%%%%%%%%%%%

Coverage problem in wireless sensor networks measures how well a region or parts of it is sensed by the deployed sensors. Definition of coverage metric depends on its applications for which sensors are deployed. In this paper, we introduce a new quality control metric/measure called {\em road coverage}. It will be used for measuring efficiency of a sensor network, which is deployed for tracking moving/mobile objects in a road network. A {\em road segment} is a sub-part of a road network.  A {\em road segment} is said to be {\em road covered} if an object travels through the entire road segment then it must be detected somewhere on the road segment by a sensor. First, we propose different definitions of road coverage metrics. Thereafter, algorithms are proposed to measure those proposed road coverage metrics. It is shown that the problem of deploying minimum number of sensors to {\em road cover} a set of road segments is NP-hard. Constant factor approximation algorithms are proposed for {\em road covering} axis-parallel road segments. Experimental performance analysis of our algorithms are evaluated through simulations.

%%%%%%%%%%%%%%%%%%%%%%%%%%%%%%%%%%%%%%%%%%%%%
\end{abstract}
\begin{IEEEkeywords}
Coverage problem, Sensor network, Moving object monitoring, Road network, Approximation Algorithm
\end{IEEEkeywords}}
\maketitle

\section{Introduction}
\label{sec:intro}

In wireless sensor network (WSN), {\em coverage problem} is an important issue. Different measures of coverage exist depending on application of the sensor network. For example {\em area coverage} \cite{thai2008} verifies every point of a region and checks whether these points are under the sensing range of at least one sensor.  A set of target points are monitored by at least $k$ sensors in {\em target $k$-coverage} problem \cite{liu06}. In $k$-barrier-coverage problem \cite{kumar05}, all crossing paths across a boundary of a region are covered/sensed by at least $k$ sensors. In point sweep coverage, a set of points are monitored after a certain fixed time interval \cite{gorain_2:2014}.

Sensor deployment strategy to attain certain level of coverage is another area of research in sensor networks. Researchers are proposing different sensor deployment strategy with minimum cost to ensure desired level of coverage. Sensor deployment strategy for area coverage is proposed by Kim et al. \cite{kim:2008}. Minimum cost based deployment scheme to ensure target coverage is proposed by Xu et al. in \cite{xu2007}.  Similarly, Yick et al. \cite{yick:2004} proposed strategies for the placement of minimum number of beacons and data loggers for a given sensor network. For tracking moving/mobile objects few path coverage metrics are defined in \cite{harada2009,ram2007} and their analytical expressions are evaluated for a given random deployment. \emph{Track coverage} problem is addressed by Baumgartner et al. \cite{baumgartner:2008}, their objective is to place a set of sensors in a rectangular region to detect tracks by at least a given number of sensors.

%%%%%%%%%% Motivation of our problem  Dinesh NIT Patna %%%%%%%%%%%%%%

\begin{figure}[h]
\centering
\includegraphics [width=7cm]{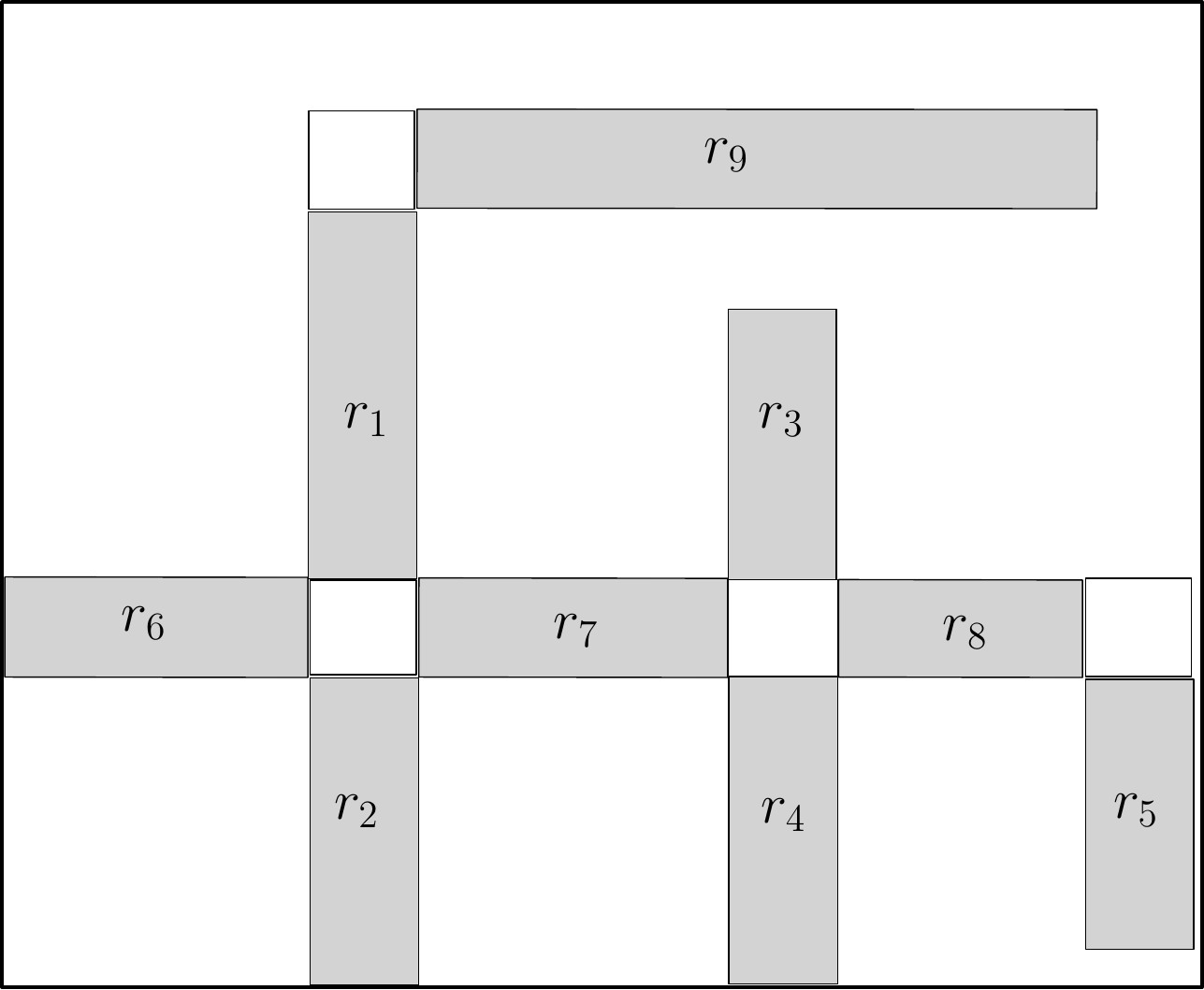}
\caption{A sample corridor or road network}
\label{road_netowrk}
\end{figure}

In certain applications, it is more suitable to cover some parts of a road network rather than the entire road network. Example of such applications are vehicles tracking, speed monitoring, congestion monitoring etc. in road networks. Although monitoring the entire road network is desirable but it makes the system expensive. Thus, the road network is divided into road segments as per requirements. And, in some cases it is acceptable to ensure that a vehicle entering one end and exiting the other end of a road segment should be monitored at least once somewhere in the road segment. An example of such road network and its road segments are shown in Figure \ref{road_netowrk}.  Let say, the road network is divided into a set of  shaded rectangular regions $(r_1,r_2,\ldots r_9)$. In general road segments are part of road networks, which is defined as per applications requirements. And our objective is to monitor vehicles/ objects traveling through the road segments. Gorain et al. in \cite{gorain:2014}, proposed scheme for patrolling a set of line segments using mobile sensors. They referred it as line sweep coverage and proposed approximation algorithm. In \cite{dash:2014}, different line coverage measuring schemes are proposed as smallest $k-covered$ line segment and longest $k-uncovered$ line segment. In this work, we generalize line segment to road segment, which is of rectangular shape and propose new coverage measures for it. In addition, we also propose different sensor deployment schemes to ensure quality of road coverage.

In this paper, we address this problem by defining a new measure of coverage called {\em road coverage} and its two variations {\em independent road coverage} and {\em collaborative road coverage} by sensing some part (length wise partial but width wise full) of a set of road segments, and propose algorithms for measuring {\em road coverage} and  sensors deployment schemes to achieve {\em road coverage}. We also show that the decision version of the sensor deployment problem is NP-hard, and present constant factor approximation algorithms for some special cases.

%% Contribution of the paper

In summery main contributions of this paper are as follows:

\begin{itemize}
\item Proposed new coverage metrics called {\em road coverage} for tracking mobile objects moving on road networks. It ensures that if an object travels the full length of any road segment then it must be detected by the sensor network.

\item Proposed algorithms for measuring {\em road coverage} for a given a road networks.

\item Analyze complexity of sensor deployment problem to ensure {\em road coverage}.

\item Proposed sensor deployment algorithms for different special cases to ensure {\em road coverage}.
\end{itemize}

%%  Organization of the paper

The rest of the paper is organized as follows. Section \ref{sec:relwork} briefly discusses related works on sensor coverage and deployment schemes. Section \ref{sec:prelim} presents some necessary backgrounds and the computational hardness of the problem. Section \ref{sec:measuring_road_coverage} describes our road coverage measurement algorithms. Sections \ref{sec:deployment_scheme} presents two constant factor approximation algorithms for deployment of sensors to ensure {\em road coverage}.  Experiment and result analysis are discussed in section \ref{sec:simulation}. Finally, section \ref{sec:conclude} concludes the paper and discusses some possible future extensions.

\section{Related works}
\label{sec:relwork}

%%%%%%%%%%% Dinesh NITP  related work updated completely  %%%%%%%%%%%%%%%%%%%%%%

% Algorithm to measure the quality of a given deployment

In literature, different coverage measures are defined to compute various quality of coverage for a given sensor deployment. Huang and Tseng \cite{huang2005} propose algorithm to test whether a given bounded region is $k$- area-covered or not. They prove that if the perimeters of all the sensors within the bounded region are $k$-covered by their neighbors then the whole area is also $k$-covered.  Kumar et al. \cite{kumar05} propose a  coverage measure called barrier coverage and proposed algorithms to verify whether a given deployment ensures barrier coverage for a given boundary. They also proved that barrier coverage can not be determined locally. {\em Trap coverage} metric is proposed by Balister et al. \cite{balister2009}. It is the longest displacement an object can make in straight line within the target region without going inside the sensing range of any sensor or it is the diameter of the longest uncovered region within the target region. {\em Path coverage} is defined  in \cite{ram2007, harada2009} for tracking objects moving in straight line path. Dash et. al. in \cite{dash:2014} proposed deterministic algorithms for finding longest k-uncovered and smallest k-covered straight line path for mobile object within a bounded region. Garain et. al. \cite{Gorain:2014} propose {\em line sweep coverage} to ensure all points on a set of line segments are traverse by a set of mobile nodes within a fixed time interval. Point sweep coverage of a set of points is proposed in  \cite{gorain_2:2014}. They proposed both centralized and distribute algorithm for point sweep coverage. They extend the algorithm for point sweep coverage to area coverage.  The area is subdivided into squares of same size, which is dependent on the sensing range of the sensor such that if mobile sensors reaches the centre it can sense the complete square region. Now centre of the squares are considered as target points and apply the point sweep coverage algorithm on this set of points to ensure area sweep coverage. Baste et al. \cite{Baste:2017} introduce edge monitoring problem. A vertex $v \in V$ monitors an edge $ \{a, b \} \in E$  if $\{v, a \} \in E$ and  $\{v, b \} \in E$. Edge Monitoring problem finds a set $S$ of vertices of a graph of size at most $k$ such that each edge of the graph is monitored by at least one element of $S$. 

% Deployment of sensors for achieving desired coverage

Finding a suitable deployment strategy to achieve desired level of coverage is another challenge in wireless sensor network. Kim et al. \cite{kim:2008} propose sensor deployment strategy to ensure 3-coverage of the deployed region as well as  sensors maintain a minimum separation among themselves.  Galota et al. \cite{galota:2001} propose a wireless base stations deployment scheme to cover a given set of target points such that the positions of the base stations are restricted to a finite set of feasible positions. Deployment scheme for covering a set of grid points is proposed by Chakrabarty et al. in \cite{chakrabarty2002}.  Xu et al. \cite{xu2007} proposed a minimum cost based deployment scheme to ensure target coverage where the position of the sensors and the target points are predefined. They assume the communication range of the sensors are sufficiently large such that they can communicate directly to the base station.  Wu et al. \cite{wu2007} propose a sensor deployment strategy in obstacle free region to maximize area coverage  by the deployed sensors. Clouqueur et al. \cite{clouqueur2002} propose a deployment strategy to ensure minimum exposure path for moving targets with minimum deployment cost.  Kumar et al. \cite{kumar05} provide optimal deployment strategy to ensure k-barrier coverage.  Bai et al.  propose an optimal sensor deployment strategy for ensuring connected coverage of a given area \cite{bai:2006}.  Agnetis et al. \cite{Agnetis:2009} address the problem of deploying sensors for full surveillance of a line segment with minimum cost under a defined cost model. They proposed a polynomial time optimal deployment scheme for covering a line segment using homogeneous sensors.  But, covering line segments using non-homogeneous sensors is NP-hard. They propose a branch-and-bound algorithm and a heuristic algorithm for non-homogeneous sensors. Zhang et al. \cite{zhang:2011} radars placement problem. Radars  are deployed on the banks of river which is modeled as piece-wise line segments. Radars are deployed to cover a given set of points on the river such that the total power consumption by the radars is minimum. Dash et. al. in \cite{dash:2013} propose deterministic sensor deployment schemes to ensure line coverage. Garain et. al. \cite{Gorain:2014} proposed deterministic algorithm for patrolling a set of line segments to ensure {\em line sweep coverage}.  In \cite{Njoya:2017} a stochastic optimization algorithm is proposed for sensor node placement to ensures target coverage with less sensors.

%%%%%%%%%%%%%%%%%%%%%%%%%%%%%%%%%%%%%%%

\section{Background and Problem Statements}
\label{sec:prelim}

In this section, necessary preliminary backgrounds and  problem statements are presented. We assume that the sensors are points in the plane and their sensing regions are circular disks. Let $circle(s_i,\rho)$ represent circular sensing region of sensor $s_i$ with sensing range $\rho$. Sensor $s_i$ can sense an event inside $circle(s_i,\rho)$.

\begin{definition}{\bf [Road Segment ($r_i=(l_i^t,l_i^b, w)$) :] }
A road segment  $r_i$ of width $w$ is a sub-part of a road network, which is defined by a rectangular region with two equal length parallel line segments ($l_i^t$, $l_i^b$) and their perpendicular separation $w$.
\end{definition}

\begin{figure}[h]
\centering
\includegraphics [width=7cm]{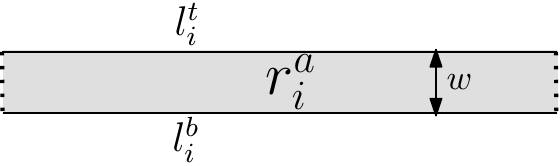}
\caption{Different parts of road segment $r_i$}
\label{road_segment}
\end{figure}

An example of road segment $r_i$ of width $w$ is shown in Figure \ref{road_segment}. $l_i^t$ and $l_i^b$ are referred as top and bottom {\em side boundary} of $r_i$. Apart from these two side boundaries, there are two more boundaries of length $w$, which are referred as {\em left end boundary} and {\em right end boundary}.  Let $r_i^a$ represent the rectangular region of the road segment $r_i$. Given a {\em road network}, which is partitioned  into set of road segments as per requirement and represent them by a set of road segments $R$. And given a  set of sensors $S$ and their sensing circles/disks. Based on the number of sensors independently or collectively sense/cover a road segment,  two {\em road coverage} metrics are proposed.

\begin{definition}{\bf [Independent Road Covered :] }
A road segment $r_i \in R$ is said to be {\bf independent road covered} by the sensors in $S$ if $r_i$'s full width but some part of its length is covered/sensed {\bf independently by a sensor} $s_j \in S$ so that if any object travels the full length of $r_i$ then it  must be detected by the sensor $s_j$.
\end{definition}

\begin{definition}{\bf [Collaborative Road Covered  :] }
A road segment $r_i \in R$ is said to be {\bf collaborative road covered } by the sensors in $S$  if its full width and some part of its length is sensed {\bf collectively by a  subset of sensors} $S_i^j \subseteq S$  so that if any object travels the full length of $r_i$ then the object must be detected by at least one sensor in $S_i^j$.
\end{definition}

%\begin{definition}{\bf [Road Covered :] }
%A road segment $r_i \in R$ is said to be {\bf road covered} if it is either {\em independent road covered} or {\em collaborative Road Covered} by the sensors in $S$.
%\end{definition}

 A road segment {\em independent road covered} implies it is also {\em collaborative road covered}. But, the reverse is not always true. Figure \ref{road_cover} shows an example of a sensor network in which a set of road segments $R= \{r_1, r_2,  r_3 \}$ are shown with shaded rectangles; a set of sensors $S=\{s_1, s_2, \ldots s_9\}$, whose  sensing regions are represented by unit circles/disks. In the figure road segment $r_1$ is {\em independent road covered} by sensor $s_3$ while the road segment $r_2$ is not {\em independent road covered} by any sensor but {\em collaborative road covered} by $\{s_7, s_5\}$. But road segment $r_3$ is neither independently nor collaboratively {\em road covered} by the deployed sensors.

%In the rest of the paper, we use "collaborative road covered" and  "road covered" interchangeable.

\begin{figure}[h]
\centering
\includegraphics [width=7cm]{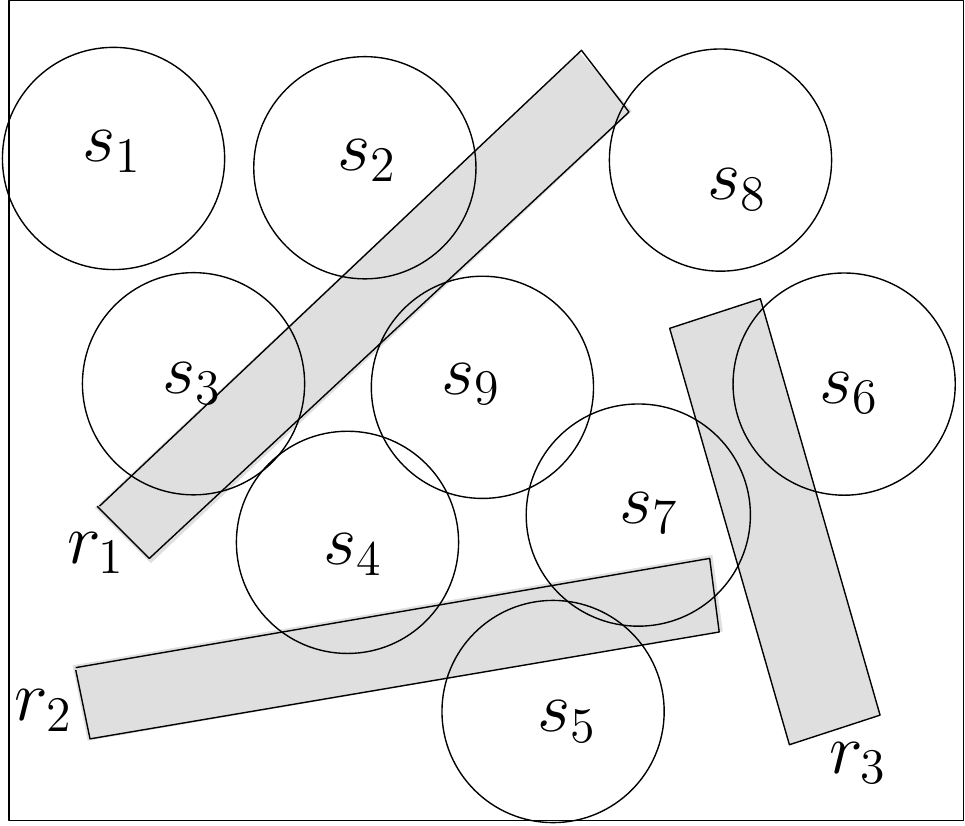}
\caption{An example of road coverage using sensors}
\label{road_cover}
\end{figure}

%\vspace{0.5cm}

\begin{definition}{\bf [Independent Road coverage :] } A set of road segments $R =\{  r_1, r_2 \ldots r_n \}$  achieves {\em independent road coverage} by a given sensor deployment if and only if all the road segments in $R$ are independent road covered by the deployed sensors.
\end{definition}

\begin{definition}{\bf [Collaborative Road Coverage :] } A set of road segments $R =\{  r_1, r_2 \ldots r_n \}$  achieves {\em collaborative road coverage} by a given sensor deployment if and only if all the road segments in $R$ are collaborative road covered by the deployed sensors.
\end{definition}

%\begin{definition}{\bf [Road coverage :] } A set of road segments $R =\{  r_1, r_2 \ldots r_n \}$  achieves {\em road coverage} by a given sensor deployment if and only if all the road segments in $R$ are {\em road covered} by the deployed sensors.
%\end{definition}

{\bf Road Coverage Measure Problem :} Given a set  of $n$ road segments $R$ and  a set of $m$ sensors positions and their circular sensing circles. Verify {\em road coverage} (independent, collaborative) of the road segments in $R$. 

Once the algorithm to measure road coverage is known, subsequently our next objective is to deploy sensors to achieve road coverage. Formally, the problem can be stated as follows:

{\bf Road Coverage Deployment Problem (RCDPL) :} Given a set  of $n$  road segments $R$, use minimum number of sensors and find their positions such that all the road segments in $R$ are {\em road covered} (independent, collaborative ).

In the next section, algorithms for measuring road coverage are presented.  In subsequent section, sensor deployment algorithms to ensure road coverage are described. 

%%%%%%%%%%%%%%%%%%%%%%%%%%%%%%%%%%%%
\section{Measuring Road Coverage}
\label{sec:measuring_road_coverage}

% Research directions: If sensing range is less than the width of the roads or two or more sensors
% together ensure road coverage of a road segment then how to verify road coverage efficiently.

In this section, we present polynomial time algorithms to verify whether a given deployment of sensors ensures road coverage for a given road network. We assume that the sensing regions of the sensors are disks (may be of different sensing ranges).

\subsection{Independent Road Coverage Verification Algorithm}

In this sub section, we present an algorithm to check each road segment against the existing sensors and ensure whether all the road segments are {\em independent road covered} or not.

% Proof of correctness

\begin{lemma} \label{}
A road segment $r_i$ is {\em independently road covered} by a sensor $s_j$ if and only if the circle corresponding to $s_j$'s sensing region intersects both bounding segments $l_i^t$ and $l_i^b$ of $r_i$.
\end{lemma}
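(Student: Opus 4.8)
The plan is to prove the two implications separately after fixing coordinates so that $r_i^a=[0,L]\times[0,w]$, with side boundaries $l_i^t=[0,L]\times\{w\}$ and $l_i^b=[0,L]\times\{0\}$, left end boundary $\{0\}\times[0,w]$, and right end boundary $\{L\}\times[0,w]$. Throughout I read ``the circle of $s_j$ intersects $l_i^t$'' as $D\cap l_i^t\neq\emptyset$, where $D=circle(s_j,\rho)$ is the (closed, convex) sensing disk of $s_j$, and I use that ``$r_i$ is \emph{independently road covered} by $s_j$'' means precisely that every continuous path inside $r_i^a$ joining a point of the left end boundary to a point of the right end boundary meets $D$ — such a path is exactly what an object traversing the full length of $r_i$ describes.

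For the forward direction I argue by contraposition: if $D$ does not meet $l_i^t$, then the path running along $l_i^t$ from the corner $(0,w)$ to the corner $(L,w)$ is a full-length traversal disjoint from $D$, so $r_i$ is not independently road covered by $s_j$; the symmetric argument handles $l_i^b$. (If one insists on traversals interior to the rectangle, a path at height $w-\varepsilon$ still misses $D$, since $D$ is closed and disjoint from the compact segment $l_i^t$.)

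For the converse, suppose $D$ meets $l_i^t$ at some $p^t=(a,w)$ and $l_i^b$ at some $p^b=(b,0)$ with $a,b\in[0,L]$. Convexity of $D$ puts the segment $\overline{p^tp^b}$ inside $D$, and convexity of $r_i^a$ puts it inside $r_i^a$, so $K:=D\cap r_i^a$ is compact, convex, and meets every horizontal line $L_y=[0,L]\times\{y\}$ for $y\in[0,w]$ (it contains the point of $\overline{p^tp^b}$ at height $y$); hence $K\cap L_y=[\alpha(y),\beta(y)]\times\{y\}$ is a nonempty closed interval, and since on $[0,w]$ the disk's horizontal chord at height $y$ is nonempty with endpoints depending continuously on $y$ and $K\cap L_y$ is that chord clipped to $[0,L]$, the functions $\alpha,\beta$ are continuous on $[0,w]$. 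Now suppose a full-length traversal $\gamma:[0,1]\to r_i^a$ avoided $K$; after discarding the trivial cases (an endpoint of $\gamma$ lying in $K$, or an entire end boundary contained in $K$, both giving immediate detection) I may assume $\gamma(0)=(0,y_0)$ and $\gamma(1)=(L,y_1)$ lie outside $K$. Writing $\gamma(t)=(x(t),y(t))$, the sets $\{t:x(t)<\alpha(y(t))\}$ and $\{t:x(t)>\beta(y(t))\}$ are open, disjoint, cover $[0,1]$ (because $\gamma$ avoids $K$), and contain $0$ and $1$ respectively (since $x(0)=0\le\alpha(y_0)$, strictly because $\gamma(0)\notin K$, and dually at $t=1$) — contradicting connectedness of $[0,1]$. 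Hence every full-length traversal meets $K\subseteq D$, which is the claim.

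The one genuinely delicate point is this converse: making precise that $K=D\cap r_i^a$ forms an unbroken ``curtain'' stretching from $l_i^t$ to $l_i^b$, so that no left-to-right path can slip around it. Once the interval structure of the cross-sections $K\cap L_y$ and the continuity of $\alpha,\beta$ are in hand, the separation reduces to the one-variable connectedness argument above; what remains is only the routine case analysis for tangencies and for end boundaries swallowed by $D$.
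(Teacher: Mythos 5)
Your proof is correct and follows the same route as the paper's: the forward direction by exhibiting a traversal running along whichever side boundary the disk misses, and the converse by exploiting convexity of $D\cap r_i^a$ together with the fact that it touches both $l_i^t$ and $l_i^b$. The only substantive difference is that you make rigorous the final step the paper leaves implicit --- that such a convex ``curtain'' actually blocks every left-to-right traversal --- via the cross-section functions $\alpha,\beta$ and the connectedness of $[0,1]$, whereas the paper stops after asserting that a segment joining $l_i^t$ to $l_i^b$ lies inside the intersection region and therefore covers the full width.
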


\begin{proof}
If both bounding segments of $r_i$ are not intersected by a sensing disk of any senor $s_j$ then there exists a path for a mobile object that can traverse the full road segment without getting detected by $s_j$. Now, assume both bounding segments are intersected by $s_j$'s sensing circle. As both road segment area $r_i^a$ and the sensing circle $circle(s_j,\rho)$ are convex shapes therefore, the intersection between them is also a single convex region. Hence, if $s_j$'s sensing circle intersects both $l_i^t$ and $l_i^b$ of $r_i$ then the intersection region must contains some part of both $l_i^t$ and $l_i^b$ of $r_i^a$. And there always exists a line joining $l_i^t$ and $l_i^b$ through the intersection region, which is a part of full width of $r_i$. Hence, $s_j$ covers some part of $r_i$'s length and full width of $r_i$.
\end{proof}

Our algorithm for verifying {\em independent road coverage} is based on the above lemma. For a road segment $r_i$ consider its upper and lower bounding line segments ( $l_i^t$ and $l_i^b$ ) separately and determine the sensors whose sensing circles intersect both bounding segments. If there exists a sensor $s_j \in S$ whose sensing circle intersects both the segments then the road segment $r_i$ is said to be {\em independent road covered} by the sensor $s_j$. Same method is followed for all road segments in $R$ and if all road segments are {\em independent road covered} then the road network attains {\em independent road coverage} by the sensors in $S$. 

% Time Complexity Analysis
\begin{theo}
 Verifying {\em independent road coverage} for a road network with $n$ road segments and $m$ sensors can be done in $O(nm)$ time.
\end{theo}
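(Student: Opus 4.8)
The plan is to read off correctness from the lemma just proved and then bound the running time by counting the elementary geometric tests performed. First I would restate the algorithm in a form convenient for analysis: loop over the $n$ road segments; for each road segment $r_i$, loop over the $m$ sensors; for each pair $(r_i,s_j)$ perform two tests, namely whether $circle(s_j,\rho)$ intersects the top side boundary $l_i^t$ and whether it intersects the bottom side boundary $l_i^b$; mark $r_i$ as \emph{independently road covered} the first time both tests succeed for the same sensor. After the loops finish, report that $R$ attains \emph{independent road coverage} if and only if every $r_i$ has been marked.

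For correctness I would invoke the preceding lemma: a single sensor $s_j$ independently road covers $r_i$ exactly when its sensing disk intersects both $l_i^t$ and $l_i^b$. Hence scanning all sensors for a fixed $r_i$ decides correctly whether $r_i$ is independently road covered, and scanning all road segments then decides independent road coverage of the whole network, directly by the definition of that notion. No further argument is needed on this side.

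For the time bound the key point is that one circle-versus-segment intersection test takes $O(1)$ time: it reduces to computing the distance from the disk centre to the line supporting the segment, clamping the foot of the perpendicular to the segment's two endpoints, and comparing the nearest-point distance with the sensing radius $\rho$ — a fixed number of arithmetic operations and comparisons, independent of $n$ and $m$. Thus each of the $nm$ ordered pairs $(r_i,s_j)$ is processed in constant time (two such tests plus $O(1)$ bookkeeping), and the concluding pass over the $n$ marks costs $O(n)$, for a total of $O(nm)$.

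I do not expect a real obstacle here; the only points deserving care are (i) spelling out explicitly that the circle–segment test is constant time rather than leaving it implicit, and (ii) noting that although an early exit from the inner loop (stopping as soon as a covering sensor is found for $r_i$) helps in practice, it does not change the worst case, so $O(nm)$ is the bound one can guarantee. One could optionally add an $\Omega(nm)$ remark — in the worst case, when no sensor covers any segment, the whole input must be examined — to show the analysis is tight, but this is not required for the stated theorem.
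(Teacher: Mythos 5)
Your proposal is correct and follows the same approach as the paper's own (very terse) proof: test each of the $nm$ segment--sensor pairs with the constant-time circle--segment intersection criterion from the preceding lemma, giving $O(m)$ per road segment and $O(nm)$ overall. Your version simply makes explicit the $O(1)$ cost of each geometric test and the appeal to the lemma for correctness, which the paper leaves implicit.
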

 
\begin{proof}
For a particular road segment verifying independent road covered can be done in $O(m)$ time. Therefore, verifying $n$ road segment can be done in $O(nm)$ time.
\end{proof}

% The time complexity of the above algorithm can be improved by the cost of space complexity.

\subsection{Collaborative Road Coverage Verification Algorithm}

In this case,  one or more than one sensors together cover a particular road segment. If a road segment is not independently road covered by any deployed sensor then it may be collectively covered by more than one sensors.

In Figure \ref{coll_road_cover}, rectangular region denotes a road segment and circles denote sensors sensing regions. Three paths ($P_i^1$, $P_i^2$, $P_i^3$) are shown from top side boundary to bottom side boundary of the road segment. All of them are completely inside sensor's sensing regions but only $P_i^2$ is within the intersection of the road segment and sensors' sensing regions, whereas path $P_i^1$ and $P_i^3$ are not. Therefore, only path $P_i^2$ ensures collaborative road coverage of the road segment $r_i$, which is collectively sensed by sensors $\{s_4, s_5, s_6\}$.

\begin{figure}[h]
\centering
\includegraphics [width=8cm]{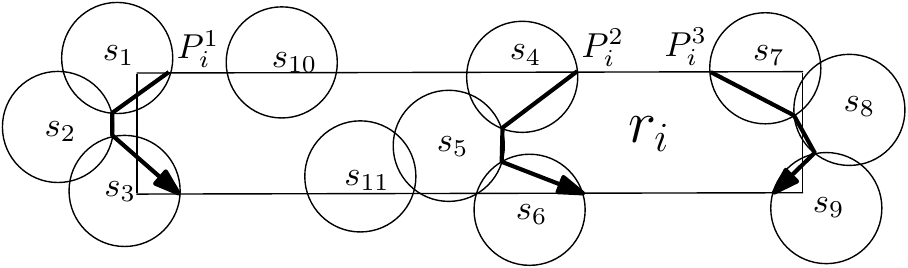}
\caption{Collaborative road coverage of road segment $r_i$}
\label{coll_road_cover}
\end{figure}

\begin{lemma} \label{lem:algo2:lem1} A set of sensors $S_i^j$ collaboratively road cover a road segment $r_i$ if and only if there exists a path $P_i^j$ from top side boundary segment $l_i^t$ to the bottom side boundary segment $l_i^b$ of the road segment $r_i$ which is completely inside the intersection regions of $r_i^a$ and the sensing circles of the sensors in $S_i^j$.
\end{lemma}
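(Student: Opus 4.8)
The plan is to prove Lemma~\ref{lem:algo2:lem1} by establishing the two directions of the biconditional separately, both of which rest on the geometric observation that blocking every top-to-bottom path through the covered region is equivalent to covering the road segment, and that the only paths that matter are those lying inside the \emph{intersection} of $r_i^a$ with the sensing disks.

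For the ``if'' direction, suppose such a path $P_i^j$ exists, running from $l_i^t$ to $l_i^b$ entirely within $r_i^a \cap \bigcup_{s \in S_i^j} circle(s,\rho)$. Consider any object that traverses the full length of $r_i$, i.e.\ any curve inside $r_i^a$ joining the left end boundary to the right end boundary. I would argue by a separation/Jordan-curve type argument that this object's trajectory must cross $P_i^j$: the path $P_i^j$ together with portions of the top and bottom boundaries separates the rectangle $r_i^a$ into a ``left'' part containing the left end boundary and a ``right'' part containing the right end boundary, so any left-to-right curve inside $r_i^a$ meets $P_i^j$. At the crossing point the object lies on $P_i^j$, hence inside some sensing disk of a sensor in $S_i^j$, so it is detected. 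Thus $S_i^j$ collaboratively road covers $r_i$.

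For the ``only if'' direction, I would argue the contrapositive: if no such path exists, I exhibit a full-length traversal of $r_i$ that avoids detection by every sensor in $S_i^j$. The set $U = r_i^a \setminus \bigcup_{s\in S_i^j} circle(s,\rho)$ is the ``uncovered'' part of the rectangle; the hypothesis that no top-to-bottom path lies in the covered region $r_i^a \cap \bigcup circle(s,\rho)$ is, by a standard planar duality (the discrete or topological Menger-type fact that in a rectangle either there is a top-to-bottom connection through a closed set or a left-to-right connection through its complement), equivalent to the existence of a left-to-right connected route through $U$ — possibly hugging the top or bottom side boundaries, which is why paths like $P_i^1, P_i^3$ in Figure~\ref{coll_road_cover} that leave $r_i^a$ do not count. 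That route is a full-length traversal of $r_i$ never entering any sensing disk of $S_i^j$, so $S_i^j$ does not collaboratively road cover $r_i$.

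The main obstacle is making the topological duality rigorous: the clean statement ``a top-to-bottom path through closed set $C$ inside a rectangle exists iff no left-to-right path through the complement exists'' requires some care about open versus closed sets, about whether the avoiding curve may run along the boundary, and about the fact that the covering region is a finite union of disks intersected with a rectangle (so it is a reasonably tame set, which is what lets one invoke such a duality without pathology). I would handle this by citing the standard topological lemma on crossing a rectangle (a consequence of the Jordan curve theorem, sometimes phrased via the ``plank''/``hex'' argument), and by being explicit that ``some part of its length is sensed'' in the definition of collaborative road covered means exactly that the covered region contains a top-to-bottom path, so the lemma is essentially unwinding the definition into a checkable connectivity condition that the subsequent algorithm can test on the arrangement of disks.
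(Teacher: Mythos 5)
Your proposal is correct, and its core idea is the same one the paper relies on: the topological duality, inside the rectangle $r_i^a$, between the existence of a top-to-bottom path through the covered set $r_i^a \cap \bigcup_{s\in S_i^j} circle(s,\rho)$ and the existence of a left-to-right escape route through its complement. The difference is one of completeness rather than of route. The paper's proof is a two-sentence contradiction argument that handles only the ``coverage implies path'' direction, and even there it simply asserts the key step (``therefore, there exists a path inside $r_i^a$ through which an object can move the full length \ldots without getting sensed'') without justification; the converse direction, which the paper actually invokes later when it deduces collaborative coverage from a path in the intersection graph, is never argued at all. You supply both halves: the ``if'' direction via a Jordan-curve separation argument showing every left-to-right traversal must cross the top-to-bottom path $P_i^j$ and hence be detected, and the ``only if'' direction via the standard Hex/Menger-type crossing lemma for a rectangle, together with the observation that the covered region is a closed, tame set (a finite union of disks intersected with a rectangle) so the duality applies without pathology. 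You also correctly flag the one point that genuinely needs care --- open versus closed sets and curves hugging the side boundaries, which is exactly why the paths $P_i^1, P_i^3$ of Figure \ref{coll_road_cover} do not certify coverage. In short, your write-up is a rigorous version of what the paper only sketches, and it buys a complete biconditional where the paper proves (loosely) only one implication.
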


\begin{proof}
Assume there is no such path from top side boundary segment  $l_i^t$ to the bottom side boundary segment  $l_i^b$ of the road segment $r_i$, which is completely inside the intersection of the road segment and sensors sensing regions and the road segment attains {\em road covered}. Therefore, there exists a path inside $r_i^a$ through which an object can able to move the full length of the road segment $r_i$ without getting sensed/detected by any sensor. Hence, the road segment $r_i$  is not road covered by the sensors. It contradicts our assumptions. 
\end{proof}

Hence to verify {\em collaborative road covered} of a road segment,  a path is determined between top side boundary to bottom side boundary of the road segment, which is completely inside the intersection of the road segment and sensors' sensing regions. If such path exists then the sensors, which are covering the path, are able to detect objects moving through the road segment. The basic idea to measure collaborative road coverage for a road segment $r_i$ is as follows. 

For each road segment $r_i$ determine a set of sensors  $S_i^t$ whose sensing regions intersect the top side boundary $l_i^t$. Similarly, determine set of sensors  $S_i^b$ whose sensing regions intersect the bottom side boundary $l_i^b$. Construct an {\em intersection graph/ coverage graph} among sensors sensing regions and the road segment $r_i$. Let sensor $s_i$ be represented by a vertex $v_i$. There is an edge between two vertices $v_i$ and $v_j$ if $circle(s_i,\rho) \cap circle(s_j,\rho) \cap r_i^a \ne \phi$ where $r_i^a$ represents the area of the road segment $r_i$ and  $circle(s_i,\rho)$ represents   sensing region of $s_i$. For road segment $r_i$ consider two dummy vertices $v_i^t$ and $v_i^b$. Put edges between $v_i^t$ to all vertices corresponding to sensors in $S_i^t$ and from $v_i^b$ to all vertices corresponding to sensors in $S_i^b$. Once the {\em intersection graph} is determined for road segment $r_i$, determine a path between $v_i^t$ to $v_i^b$ in the intersection graph. Repeat the same process for all road segments in $R$.  Let  $PV_i^j$ denote a path between $v_i^t$ and $v_i^b$ on the intersection graph of $r_i$. Let $V_i^j$ denote  set of internal vertices on the path $PV_i^j$ except the start and end dummy vertices $v_i^t$ and $v_i^b$. Let $S_i^j$ denote set of sensors corresponding to the vertices in $V_i^j$.

% Proof of correctness

\begin{lemma} \label{}
If there is a path $PV_i^j$ on the intersection graph of $r_i$ between the dummy nodes $v_i^t$ and $v_i^b$ then the road segment $r_i$ is collaboratively road covered by the sensors in $S_i^j$. 
\end{lemma}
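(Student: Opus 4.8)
The plan is to reduce this lemma to Lemma~\ref{lem:algo2:lem1} by exhibiting a continuous path in the plane from the image of the combinatorial path $PV_i^j$. The key observation is that a path $v_i^t, v_{k_1}, v_{k_2}, \ldots, v_{k_p}, v_i^b$ in the intersection graph encodes, by construction of the edges, a chain of overlapping convex regions: each edge $\{v_{k_\ell}, v_{k_{\ell+1}}\}$ certifies that $circle(s_{k_\ell},\rho) \cap circle(s_{k_{\ell+1}},\rho) \cap r_i^a \neq \phi$, the edge from $v_i^t$ certifies that $circle(s_{k_1},\rho)$ meets the top boundary $l_i^t$ inside $r_i^a$, and the edge to $v_i^b$ certifies that $circle(s_{k_p},\rho)$ meets $l_i^b$ inside $r_i^a$.

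First I would pick, for each consecutive pair of sensors on the path, a witness point $q_\ell \in circle(s_{k_\ell},\rho) \cap circle(s_{k_{\ell+1}},\rho) \cap r_i^a$, and also a point $q_0 \in circle(s_{k_1},\rho) \cap l_i^t$ and a point $q_{p} \in circle(s_{k_p},\rho) \cap l_i^b$. Then I would build the continuous path $P_i^j$ as the concatenation of straight segments $q_0 q_1, q_1 q_2, \ldots, q_{p-1} q_p$. The crucial claim is that each segment $q_{\ell-1} q_\ell$ lies entirely inside $circle(s_{k_\ell},\rho) \cap r_i^a$: both endpoints lie in that set (since $q_{\ell-1}$ and $q_\ell$ are both in $circle(s_{k_\ell},\rho)$ by the choice of witnesses, and both are in $r_i^a$), and $circle(s_{k_\ell},\rho) \cap r_i^a$ is an intersection of two convex sets, hence convex, so it contains the segment joining its two points. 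Consequently the whole path $P_i^j$ is contained in $\bigcup_{\ell} \big(circle(s_{k_\ell},\rho) \cap r_i^a\big) \subseteq \big(\bigcup_{\ell} circle(s_{k_\ell},\rho)\big) \cap r_i^a$, and it runs from a point of $l_i^t$ to a point of $l_i^b$. Applying Lemma~\ref{lem:algo2:lem1} in the direction ``existence of such a path $\Rightarrow$ collaboratively road covered'' then finishes the argument, with $S_i^j$ exactly the set of sensors indexed by the internal vertices $V_i^j$.

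The main obstacle, and the point to be careful about, is the convexity bookkeeping: one must ensure the witness points are chosen so that \emph{consecutive} segments share an endpoint (so the concatenation is genuinely a connected path) and that each segment is charged to the single sensor common to both of its endpoints, rather than sloppily to ``the sensors on the path.'' A minor additional point is the degenerate case where the path has length two ($v_i^t, v_{k_1}, v_i^b$), i.e. a single sensor independently covers a feasible path; here $P_i^j$ is the single segment $q_0 q_p$ inside $circle(s_{k_1},\rho) \cap r_i^a$, which is consistent with the earlier independent-coverage lemma. Everything else is routine, so I would keep the write-up short and lean entirely on Lemma~\ref{lem:algo2:lem1}.
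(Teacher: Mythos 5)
Your proposal is correct and follows essentially the same route as the paper: both construct a piecewise-linear path through witness points chosen in the pairwise intersections $circle(s_{k_\ell},\rho)\cap circle(s_{k_{\ell+1}},\rho)\cap r_i^a$ (and on $l_i^t$, $l_i^b$), use convexity of $circle(s,\rho)\cap r_i^a$ to keep each segment inside the region charged to a single sensor, and then invoke Lemma~\ref{lem:algo2:lem1}. Your write-up is in fact slightly more careful than the paper's about the endpoint-sharing bookkeeping and the length-two degenerate case, but the argument is the same.
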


\begin{proof}
In other words, if there is a path $PV_i^j$ between $v_i^t$ and $v_i^b$ on the intersection graph of $r_i$ then there exists a piecewise-linear path $P_i^j$  between $l_i^t$ and $l_i^b$.  In addition, the path $P_i^j$ is completely inside the  intersection region of the road segment $r_i^a$ and sensing circles of the sensors in $S_i^j$. 

On the path  $PV_i^j$, let $v_i$, and $v_j$ be two consecutive internal vertices and their corresponding sensors are $s_i$ and $s_j$.  Let $s_i$ be a sensor in $S_i^t$, then intersection point between $s_i$'s sensing circle and $l_i^t$ is referred as $p_i^t$. There is another intersection point between sensing circles of $s_i$ and $s_j$, which is inside $r_i^a$. This intersection point is referred as $p_{ij}$. Since $p_i^t$ and $p_{ij}$ both points are inside the convex regions $r_i^a$ and $circle(s_i,\rho)$, therefore the line segment joining $p_i^t$ and  $p_{ij}$ is completely inside $r_i^a$ and $circle(s_i,\rho)$. In this way, it can be shown that for the path $PV_i^j$  in the intersection graph of $r_i$ there exists a piece wise linear path $P_i^j$ between $l_i^t$ to $l_i^b$. The path $P_i^j$ is passing through the intersection points between the sensing circles of the sensors in $S_i^j$ and side boundary of $r_i$. As well as the path $P_i^j$ is passing through the intersection regions of road segment $r_i^a$ and the sensing circles of the sensors in $S_i^j$. Once such path exists then according to lemma \ref{lem:algo2:lem1}, $r_i$ is collaboratively road covered by the sensors in $S_i^j$.
\end{proof}

For example, intersection graph corresponding to the road segment $r_i$ and the sensors deployment in Figure \ref{coll_road_cover} is shown in  Figure \ref{intersection_graph}. For the road segment $r_i$, $S_i^t= \{s_1, s_{10}, s_4, s_7 \} $ and  $S_i^b = \{s_3, s_{11}, s_5, s_6, s_9 \} $. There is a path $PV_i^j= \{v_i^t, v_4, v_5, v_6, v_i^b \}$ in the intersection graph between $v_i^t$ and $v_i^b$ through the internal nodes  $ V_i^j= v_4, v_5$ and $v_6$. Hence,  sensors $S_i^j=  s_4, s_5$ and  $s_6$ collaboratively road cover the road segment $r_i$.

\begin{figure}[h]
\centering
\includegraphics [width=8cm]{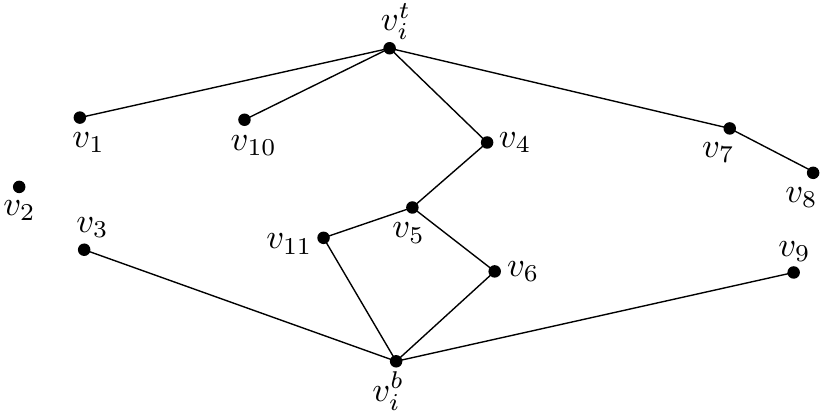}
\caption{Intersection graph/ coverage graph for Figure  \ref{coll_road_cover}}
\label{intersection_graph}
\end{figure}

% Time Complexity Analysis

\begin{theo}
Verification of {\em collaborative road coverage} of a road network with $n$ road segments and $m$ sensors can be done in $O(nm^2)$ time.
\end{theo}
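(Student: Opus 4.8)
The plan is to bound the running time of the \emph{collaborative road coverage} verification algorithm just described, which processes each road segment independently, by showing that the work per road segment is $O(m^2)$.

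Fix a road segment $r_i$. The algorithm does three things: (i) compute the sets $S_i^t$ and $S_i^b$ of sensors whose sensing disks meet the boundary segments $l_i^t$ and $l_i^b$; (ii) build the intersection graph on the $m$ sensor vertices plus the two dummy vertices $v_i^t,v_i^b$; and (iii) search for a path from $v_i^t$ to $v_i^b$ in that graph. For (i), testing whether $circle(s_j,\rho)$ crosses a given segment is a constant-time predicate (compare the distance from $s_j$ to the segment against $\rho$), so $S_i^t$ and $S_i^b$ are found in $O(m)$ time, and the edges from $v_i^t$ and from $v_i^b$ to the corresponding sensor vertices are added in $O(m)$ time. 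For (ii), there is an edge $\{v_j,v_k\}$ exactly when $circle(s_j,\rho)\cap circle(s_k,\rho)\cap r_i^a\neq\emptyset$; there are $\binom{m}{2}=O(m^2)$ candidate pairs, and --- this is the single step needing a short geometric argument --- each such three-way nonemptiness test is decidable in $O(1)$ time. For (iii), the graph has $O(m)$ vertices and $O(m^2)$ edges, so one breadth-first or depth-first traversal from $v_i^t$ runs in $O(m^2)$ time, which suffices since we only need existence of a path (Lemma \ref{lem:algo2:lem1}). Summing over the $n$ road segments gives the claimed $O(nm^2)$ bound.

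The crux is therefore the constant-time predicate for $K:=circle(s_j,\rho)\cap circle(s_k,\rho)\cap r_i^a\neq\emptyset$. Each of the three regions is convex, so $K$ is convex, and $\partial K$ is a closed curve assembled from circular arcs of the two disk boundaries and axis-parallel pieces of $\partial r_i^a$. The idea is: if $K$ is nonempty then it contains a point of a fixed $O(1)$-size candidate set, namely the two circle--circle intersection points, the at most sixteen points where the two circles cross the four sides of $r_i^a$, the four corners of $r_i^a$, and the two disk centers. Indeed, whenever $\partial K$ uses at least two of the three boundary types it has a transition point lying in one of the pairwise intersection sets; otherwise $K$ equals a whole disk or the whole rectangle and is witnessed by the corresponding center or a corner; and if $K$ is lower-dimensional its endpoints are again among these candidates. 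So one simply tests each candidate for membership in all three regions, each test being $O(1)$.

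I expect the main obstacle to be exactly this geometric predicate --- making the candidate-point / active-constraint case analysis airtight, in particular handling tangency and containment degeneracies --- while the set-construction and graph-search costs are routine bookkeeping.
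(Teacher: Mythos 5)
Your proposal is correct and follows essentially the same route as the paper: build the sensor intersection graph for each road segment in $O(m^2)$ time, run a single graph search in time linear in the $O(m^2)$ edges, and sum over the $n$ segments. The only difference is that you additionally justify the constant-time test for $circle(s_j,\rho)\cap circle(s_k,\rho)\cap r_i^a\neq\emptyset$ via a candidate-point argument, a detail the paper simply asserts.
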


\begin{proof}
Time complexity of collaborative road coverage is measured in two steps: determining an intersection graph and then determining a path between $v_i^t$ and $v_i^b$ for each road segment $r_i :  i \in \{1 \ldots  n \}$. Computation time to find intersection graph corresponding to a road segment is  $O(m^2)$. Once the intersection graph is known, finding a path in the intersection graph for the road segment is linear to the number of edges in the intersection graph which is in worst case $O(m^2)$. Therefore, total time complexity to verify collaborative road coverage for a road network consist of $n$ road segments is $O(nm^2)$.
\end{proof}

%%%%%%%%%%%%%%%%%%%%%%%%%%%%%%%%%%%%%%%%%%%%%%%%
\section{Sensors Deployment Schemes to Ensure Road Coverage}
\label{sec:deployment_scheme}

In this section, we discuss sensor deployment schemes to ensure {\em independent road coverage}. We assume that road segments are axis parallel of a given fixed width $w$ and sensors sensing regions are circular disks of a given fixed radius $\rho \ge w$. First, we analyze the complexity of {\bf RCDPL} problem. Thereafter, we discuss two sensor deployment algorithms for two different cases : (i) sensors are allowed to deploy at any arbitrary location, and  (ii) sensors are allowed to deploy only along the side boundaries (top and bottom side boundaries) of the road segments. Before discussing our algorithms in detail, we introduce few terminologies.

\begin{figure}[h]
\centering
\includegraphics [width=8cm]{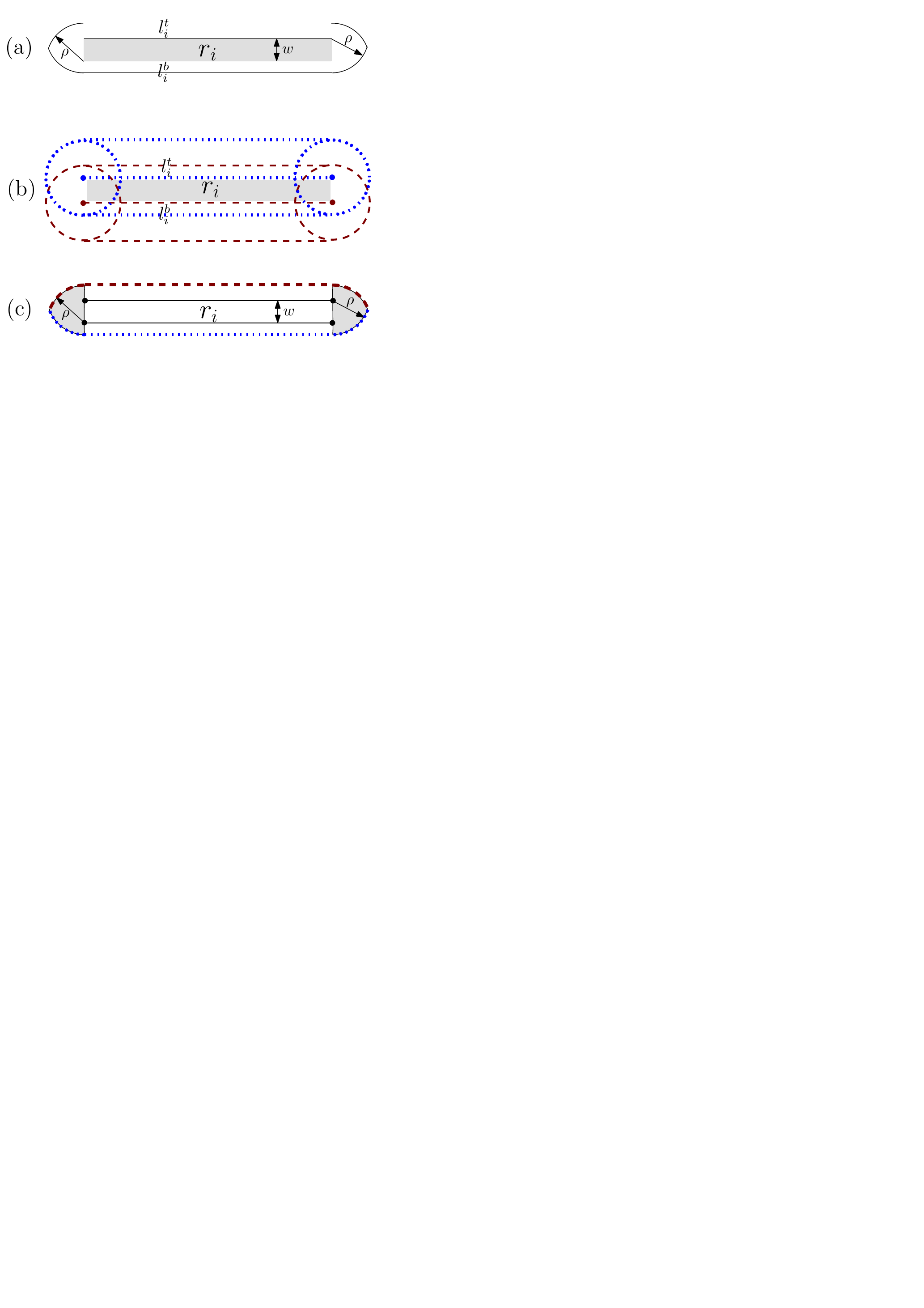}
\caption{(a) Capsule of a road segment $r_i$, (b) Intersection of Minkowski sums, and (c) $Lcap(r_i,\rho)$ and $Rcap(r_i,\rho)$}
\label{road_deploy}
\end{figure}

\begin{definition}{\bf [Capsule :]}
For a given road segment $r_i$ of width $w$ and a positive real number $\rho \ge w$, the capsule $C(r_i,\rho)$ (shown in Figure \ref{road_deploy}(a)) is the intersection of Minkowski sums \cite{compgeom:2000} of a disk of radius $\rho$ on the two segments $l_i^t$ and $l_i^b$ (drawn with black dashed lines and blue dotted lines see Figure \ref{road_deploy}(b)) corresponding to the road segment $r_i$,  which defines the capsule $C(r_i,\rho)$.
\end{definition}

\begin{definition}{\bf [Cap :] }
For a given road segment $r_i$ of width $w$ and a positive real number $\rho \ge w$, cap of the capsule $C(r_i,\rho)$ is left and right part of it as shown by shaded region in Figure  \ref{road_deploy}(c). There are two caps of a capsule $C(r_i,\rho)$ : left-cap $LCap(r_i,\rho)$ and right-cap $RCap(r_i,\rho)$ based on their positions with respect to the capsule $C(r_i,\rho)$.
\end{definition}

\begin{observation} \label{obs:1}
A road segment $r_i$ is independent road covered by a sensor $s_j$ with sensing range $\rho$ if and only if the point sensor $s_j$ is placed inside the capsule $C(r_i,\rho)$ where $\rho \ge w$. 
\end{observation}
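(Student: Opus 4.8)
The plan is to reduce the statement to the earlier lemma characterizing independent road coverage (a sensor $s_j$ independently road covers $r_i$ if and only if $circle(s_j,\rho)$ meets both side boundaries $l_i^t$ and $l_i^b$), and then to identify the capsule $C(r_i,\rho)$ with exactly the locus of points whose sensing disk meets both of those segments.

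First I would make explicit the standard fact about the Minkowski sum of a disk with a line segment: if $M(l,\rho)$ denotes the Minkowski sum of the closed disk of radius $\rho$ centered at the origin with a segment $l$, then for any point $p$ in the plane, $p \in M(l,\rho)$ if and only if $\mathrm{dist}(p,l) \le \rho$, i.e. if and only if the closed sensing disk $circle(p,\rho)$ intersects $l$. This is immediate from the definition of the Minkowski sum: $p \in M(l,\rho)$ means $p = x + y$ for some $x \in l$ and some $y$ with $\|y\| \le \rho$, which says precisely that some point of $l$ lies within distance $\rho$ of $p$.

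Applying this to the two side boundaries of $r_i$, a point $p$ lies in $M(l_i^t,\rho)$ iff $circle(p,\rho)$ intersects $l_i^t$, and $p$ lies in $M(l_i^b,\rho)$ iff $circle(p,\rho)$ intersects $l_i^b$. Since by definition $C(r_i,\rho) = M(l_i^t,\rho) \cap M(l_i^b,\rho)$, a point $p$ lies in the capsule $C(r_i,\rho)$ if and only if $circle(p,\rho)$ intersects both $l_i^t$ and $l_i^b$. Taking $p$ to be the location of the point sensor $s_j$ and invoking the preceding lemma, this is equivalent to $r_i$ being independent road covered by $s_j$, which is the claim.

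I do not expect a genuine obstacle here; the only points needing a little care are: (i) matching the open/closed conventions — the sensing region is the closed disk and ``intersects'' should be read as a nonempty closed intersection, so the Minkowski sum is the closed stadium and all the equivalences above use $\le$; and (ii) observing that the hypothesis $\rho \ge w$ is not actually needed for the logical equivalence itself (if $\rho$ were too small the capsule would simply be empty and both sides of the ``if and only if'' would fail), but is kept as the standing assumption of this section so that the capsule has nonempty interior and the later deployment arguments are nonvacuous. A sentence or two on each suffices; the substance of the proof is the one-line unwinding of the Minkowski-sum definition together with the earlier lemma.
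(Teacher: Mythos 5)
Your proposal is correct and is exactly the reasoning the paper leaves implicit: the paper states this as an unproved observation, and the intended justification is precisely your combination of the earlier lemma (disk must meet both $l_i^t$ and $l_i^b$) with the fact that the capsule, being the intersection of the two Minkowski sums, is the locus of centers whose $\rho$-disk meets both side boundaries. Your side remarks on closedness and on $\rho \ge w$ being a standing assumption rather than a logical necessity are also accurate.
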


\subsection{Complexity results for RCDPL}
\label{ssec:complexity}

Fowler et al. \cite{fowler1981} showed that covering a given set of points in the plane using minimum number of unit disks is NP-hard. Points are a special case of road segments, where lengths and widths of the road segments are zero. So, covering a given set of points in the plane using minimum number of disks is a special case of our problem RCDPL. Hence, RCDPL is NP-Hard. 

Two constant factor approximation algorithms for independent road covering axis parallel road segments are described in the following two subsections.

\subsection{Approximation algorithm for sensor deployment at arbitrary place}
\label{ssec:8-factor}

In this subsection, we present approximation algorithm for sensor deployment to ensure {\em independent road coverage} for axis parallel road segments. We present an 8-factor approximation algorithm for this problem.

%First we present a 8-factor approximation algorithm then we improve it for a special case where length of the road segments are greater than or equal to their width. 

First, the road segments in $R$ is divided into two disjoint subsets $R_h$ (denotes set of horizontal road segments) and $R_v$ (denotes set of vertical road segments). The deployment scheme to cover axis parallel road segments is divided into two phases. In the first phase,  sensors are deployed to cover the horizontal road segments in $R_h$, thereafter similar technique is followed to cover the vertical road segments in $R_v$. For the sake of simplicity, we discuss deployment scheme only for horizontal road segments.

%Therefore, in the rest part of this section road segment refers to horizontal road segment only.  Length of a road segment means horizontal length and width and height are used interchangeably.

\begin{algorithm}%[!h]
% \SetLine
	  
	  $Q_h = \emptyset$,  $Q_v = \emptyset$ \;
	   
	  Classify the road segments in $R$  into two disjoint sets $R_h$ and $R_v$ depending on their orientations.
	  
	  \tcc{Find sensors requirement $Q_h$ for covering horizontal road segments in $R_h$}
	
	  $L_h$ = Sort the road segments in $R_h$ from left to right based on their right end boundary's x-coordinate values\;
	  
	  $I_h = \emptyset$ \;

	  \While { $ L_h \neq \emptyset $}
	  {
	    Select the first road segment $r_i$ from $L_h$ \;
	    $L_h=L_h \setminus r_i$ \;    
	    $I_h = I_h \cup r_i$ \;
	    
	   \While{ $\exists_{r_j} \in L_h \mid (C(r_i,\rho) \cap C(r_j,\rho) \neq \emptyset ) $ }
	    {
	          $L_h=L_h \setminus r_j$ \;
	    }
	    
	   Based on the requirement deploy at most  four sensors $Q_i =\{s_{i1},s_{i2}, s_{i3}, s_{i4}\}$ at the right end boundary of $r_i$, as shown in Figure \ref{fig:hz_road_ind_cover}(c) \;
	    
	    $Q_h= Q_h \cup Q_i$
	  }
	  
	  \tcc{Similarly, find sensors requirement $Q_v$ for covering vertical road segments in $R_v$}
	  
	    $Q= Q_h \cup Q_v$  
	    
	    Return $Q$
\caption{Sensor Deployment Algorithm for Independent Road Coverage of Axis parallel Road Segments $R$}
\label{algo:SensorDeployment}
\end{algorithm}

Sort all the horizontal road segments in increasing order of their right end boundary's x-coordinates and store them in a list $L_h$. Next, select a road segment $r_i$ having left-most right end boundary and add it to another list $I_h$. Put four sensors $Q_i= \{s_{i1},s_{i2}, s_{i3}, s_{i4}\}$ at the right end of $r_i$, as shown in Figure \ref{fig:hz_road_ind_cover}(c). This deployment covers any other road segment $r_j$ which share a position $p$ with $r_i$ such that a sensor positioned at $p$ is able to independent road cover both $r_i$ and $r_j$ together simultaneously. Remove all such road segments $r_j$ from $L_h$, which are covered by these four sensors. Repeat the above process for the remaining road segments in $L_h$ until $L_h$ becomes empty. The detail algorithm for covering axis parallel road segments $R$ is presented in Algorithm \ref{algo:SensorDeployment}.

\begin{lemma}
\label{lem:aa}
Two road segments $r_i$ and $r_j$ are independent road covered by a sensor with sensing range $\rho$  if and only if capsules $C(r_i,\rho)$ and $C(r_j,\rho)$ intersect with each other.
\end{lemma}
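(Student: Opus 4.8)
The plan is to obtain this lemma directly from Observation~\ref{obs:1}, which characterizes the set of all sensor positions that independent road cover a single road segment $r_i$ as \emph{exactly} the capsule $C(r_i,\rho)$. The central idea is that the event ``a single sensor of range $\rho$ independent road covers both $r_i$ and $r_j$'' is, by that characterization, equivalent to ``some point lies simultaneously in $C(r_i,\rho)$ and in $C(r_j,\rho)$'', which is precisely the assertion that the two capsules intersect. So the proof reduces to applying Observation~\ref{obs:1} twice, once to each road segment.

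For the forward direction I would assume a sensor $s$ with sensing range $\rho$ placed at a point $p$ independent road covers both $r_i$ and $r_j$. Applying Observation~\ref{obs:1} to $r_i$ yields $p \in C(r_i,\rho)$, and applying it to $r_j$ yields $p \in C(r_j,\rho)$; hence $p \in C(r_i,\rho) \cap C(r_j,\rho)$ and the intersection is nonempty. For the converse I would pick any point $p \in C(r_i,\rho) \cap C(r_j,\rho)$ (which exists by hypothesis), place a sensor $s$ of range $\rho$ at $p$, and invoke the ``if'' part of Observation~\ref{obs:1}: since $p \in C(r_i,\rho)$, $s$ independent road covers $r_i$, and since $p \in C(r_j,\rho)$, $s$ independent road covers $r_j$. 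Thus a single sensor covers both, completing the equivalence.

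There is essentially no obstacle here, since both implications are immediate from Observation~\ref{obs:1}; the only points worth stating carefully are that the standing assumption $\rho \ge w$ of this section holds for both $r_i$ and $r_j$ (so Observation~\ref{obs:1} is applicable to each), and that it is the ``only if'' half of Observation~\ref{obs:1} that guarantees the capsule is the \emph{full} locus of covering positions rather than merely a subset, which is what makes the forward direction go through.
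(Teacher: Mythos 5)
Your proposal is correct and follows essentially the same route as the paper: the paper's proof also applies Observation~\ref{obs:1} to each of $r_i$ and $r_j$ and concludes that a common covering position is exactly a common point of the two capsules. If anything, your write-up is slightly more careful, since you spell out both directions of the equivalence and the role of $\rho \ge w$, which the paper leaves implicit.
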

\begin{proof}
Road segment $r_i$ can be independent road covered by a sensor $s_k$  iff $s_k$ is placed inside $C(r_i, \rho)$. Similarly, the same sensor $s_k$ covers $r_j$ iff $s_k$ is also inside $C(r_j, \rho)$. Hence there must be a common intersection point between  $C(r_i,\rho)$ and  $C(r_j,\rho)$. 
 \end{proof}

\begin{lemma} 
\label{lem:bb}
For any two road segments $r_i$ and $r_j$ in $L_h$, if $C(r_i,\rho)$ and $C(r_j, \rho)$ intersect with each others and $r_i$ precedes $r_j$ in $L_h$ then some portion of $r_j$'s length but full width is completely inside right-cap $RCap(r_i, 2\rho)$.
\end{lemma}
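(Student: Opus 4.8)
The plan is to fix coordinates, use Observation~\ref{obs:1} (equivalently Lemma~\ref{lem:aa}) to extract a witness point $q\in C(r_i,\rho)\cap C(r_j,\rho)$, and then exhibit a single explicit vertical cross-section of $r_j$ that lies inside $RCap(r_i,2\rho)$. Place $r_i$ so that $l_i^b$ is the segment from $(0,0)$ to $(\ell_i,0)$ and $l_i^t$ the segment from $(0,w)$ to $(\ell_i,w)$; then the right end boundary of $r_i$ sits at abscissa $\ell_i$ and $RCap(r_i,2\rho)=C(r_i,2\rho)\cap\{x\ge\ell_i\}$, which is convex. Write $r_j$ with $l_j^b$ from $(a_j,c_j)$ to $(a_j+\ell_j,c_j)$ and $l_j^t$ from $(a_j,c_j+w)$ to $(a_j+\ell_j,c_j+w)$; the hypothesis that $r_i$ precedes $r_j$ in $L_h$ is exactly $\ell_i\le a_j+\ell_j$. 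Since $q\in C(r_i,\rho)$ it is within distance $\rho$ of $l_i^t$ and of $l_i^b$, and since $q\in C(r_j,\rho)$ it is within distance $\rho$ of $l_j^t$ and of $l_j^b$; comparing the vertical distances from $q$ to these four horizontal segments gives $q_y\in[w-\rho,\rho]\cap[c_j+w-\rho,\,c_j+\rho]$, and non-emptiness of this intersection forces $|c_j|\le 2\rho-w$, i.e.\ $|c_j|+w\le 2\rho$. I will use this width bound repeatedly.

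Next I build the cross-section. Let $x^{\ast}=\min\{\max\{q_x,a_j\},\,a_j+\ell_j\}$, which is the common abscissa of the feet $n_j^b=(x^{\ast},c_j)$ and $n_j^t=(x^{\ast},c_j+w)$ of the perpendiculars dropped from $q$ onto $l_j^b$ and $l_j^t$ (they share an abscissa because $l_j^b$ and $l_j^t$ have the same $x$-range). Put $x_0=\max\{x^{\ast},\ell_i\}$; then $x_0\in[a_j,a_j+\ell_j]$ (using $x^{\ast}\ge a_j$, $x^{\ast}\le a_j+\ell_j$ and $\ell_i\le a_j+\ell_j$), so $\sigma=\{x_0\}\times[c_j,c_j+w]$ is a genuine full-width cross-section of $r_j$ lying at or to the right of $r_i$'s right boundary. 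By convexity of $C(r_i,2\rho)$ it suffices to check that the two endpoints of $\sigma$ lie in $C(r_i,2\rho)$. If $x^{\ast}\ge\ell_i$, then $x_0=x^{\ast}$ and those endpoints are $n_j^b,n_j^t$: from $|q-n_j^b|=\operatorname{dist}(q,l_j^b)\le\rho$ and $\operatorname{dist}(q,l_i^b)\le\rho$ the triangle inequality yields $\operatorname{dist}(n_j^b,l_i^b)\le 2\rho$, and similarly for $l_i^t$ and for $n_j^t$. If instead $x^{\ast}<\ell_i$, then $x_0=\ell_i$ and the endpoints are $(\ell_i,c_j)$ and $(\ell_i,c_j+w)$, whose distances to $l_i^b$ and $l_i^t$ (their nearest points being the right endpoints $(\ell_i,0)$ and $(\ell_i,w)$) are among $|c_j|,|c_j\pm w|$, each at most $|c_j|+w\le 2\rho$ by the width bound. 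Either way $\sigma\subseteq C(r_i,2\rho)\cap\{x\ge\ell_i\}=RCap(r_i,2\rho)$, which is the assertion.

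The two points that make this go through are the remark that the two perpendicular feet share an abscissa — so that we really obtain a full-width cross-section rather than two isolated points — and the width bound $|c_j|+w\le 2\rho$, which is exactly what is needed when $q$'s natural cross-section of $r_j$ lands to the left of $\ell_i$ and must be slid over to $x=\ell_i$; these are where I expect to spend the effort. The one genuinely delicate step is strengthening ``a cross-section'' to ``a portion of positive length'': when $C(r_i,\rho)\cap C(r_j,\rho)$ is full-dimensional one picks $q$ in its interior, all the $\le 2\rho$ estimates become strict, the endpoints of $\sigma$ fall in the interior of $C(r_i,2\rho)$, and a thin slab of $r_j$ around $\sigma$ still fits inside $RCap(r_i,2\rho)$; the only exception is the degenerate configuration in which the two capsules meet along a single segment, which the deployment algorithm never needs.
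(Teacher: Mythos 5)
Your proof is correct, and it takes a genuinely different route from the paper's. The paper argues by a case split on whether $r_j$'s left end boundary lies before or after $r_i$'s right end boundary: in the overlapping case it asserts that both right corners of $r_i$ are within $2\rho$ of $l_j^t$ and $l_j^b$ and concludes that the full height of $r_j$ there lies in $RCap(r_i,2\rho)$; in the disjoint case it takes a common point of $RCap(r_i,\rho)$ and $LCap(r_j,\rho)$ and applies the triangle inequality to corner points, concluding that $r_j$'s left end boundary lies in $RCap(r_i,2\rho)$. You instead run a single unified construction: a witness point $q\in C(r_i,\rho)\cap C(r_j,\rho)$, its perpendicular feet on $l_j^t$ and $l_j^b$ (sharing an abscissa), clamping that abscissa to $\max\{\cdot,\ell_i\}$, and convexity of $RCap(r_i,2\rho)=C(r_i,2\rho)\cap\{x\ge\ell_i\}$. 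Your derived bound $|c_j|+w\le 2\rho$ is precisely the ingredient that makes the paper's first case rigorous (the paper's claim about distances from $r_i$'s right corners is stated without justification), so your argument is in effect a repaired and unified version of the paper's, at the cost of more explicit coordinate bookkeeping. Two remarks: your closing concern about upgrading a single cross-section to a positive-length portion is not actually needed, since the only downstream use of this lemma is the $w$-height independent covering argument, for which one full-width vertical segment of $r_j$ inside $RCap(r_i,2\rho)$ suffices (it is then contained in one sensor's disk, which therefore meets both $l_j^t$ and $l_j^b$), and the paper's own disjoint case likewise produces only the single cross-section given by $r_j$'s left end boundary; and your identification of the right-cap with $C(r_i,2\rho)\cap\{x\ge\ell_i\}$ is consistent with how the paper uses it elsewhere (its boundary arcs are centered at the right corners of $r_i$).
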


\begin{proof}
Since $r_i$ precedes $r_j$ in $L_h$, $r_i$ right end boundary's x-coordinate value is less than or  equal to $r_j$'s right end boundary's x-coordinate value. There are two possibilities of the left end boundary of $r_j$ (i) $r_j$'s left end boundary starts before or from the right end boundary of $r_i$ or  (ii) $r_j$'s left end boundary starts after the right end boundary of $r_i$. Therefore, for the case (i) where $r_j$'s left end boundary starts before the right end boundary of $r_i$,  as $C(r_i,\rho)$ and $C(r_j, \rho)$ intersect with each others, distance from both right corner points of $r_i$ to $l_j^t$ and $l_j^b$ are less than or equal to $2\rho$ (shown in Figure \ref{fig:CapsuleCapsule_intersection}(a)). Hence, full height of $r_j$ must be inside the right-cap $RCap(r_i,2\rho)$.  Now consider case (ii), where $r_j$'s left end boundary starts after the right end of $r_i$. Since $C(r_i,\rho)$ and  $C(r_j,\rho)$ intersect with each other, there is a common intersection point between $RCap(r_i, \rho)$ and $LCap(r_j, \rho)$ from where the distance to $r_i$'s right end corner points and distance to left end corner points of $r_j$ are less than or equal to $\rho$, as shown in Figure \ref{fig:CapsuleCapsule_intersection}(b). Hence, according to triangular inequality distance between the farthest pair of corner points of $r_i$'s right end  and $r_j$'s left end  $ \le 2\rho$. Therefore, both corner points of $r_j$'s  left end must be inside the right-cap $RCap(r_i, 2\rho)$ and hence, the left end boundary (full width) of $r_j$ is completely inside $RCap(r_i, 2\rho)$. 
\end{proof}

\begin{figure}[h]
\centering
\includegraphics [width=8cm]{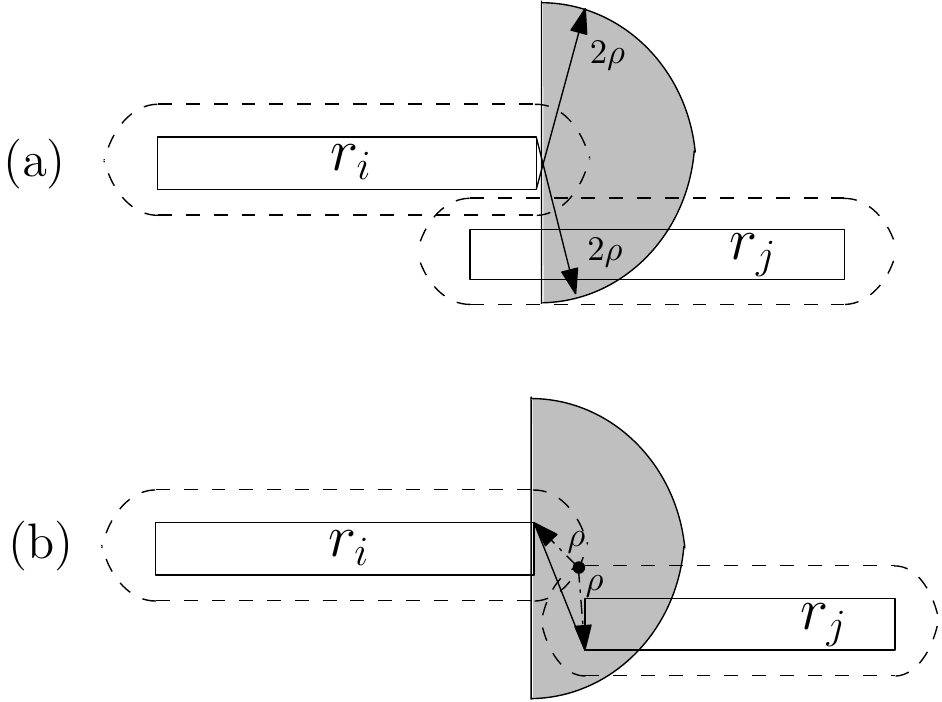}
\caption{Capsule capsule intersection}
\label{fig:CapsuleCapsule_intersection}
\end{figure}

%Therefore, if we cover  right-cap $RCap(r_1, 2\rho)$ then  both $r_1$ and $r_2$ are get road covered by the deployed sensors. Assume  $C(r_1, \rho)$ and  $C(r_2, \rho)$ intersect with each other and sensors are deployed to cover the shaded   region but $r_2$ is not road covered. Saying  $r_2$ is not road covered, is equivalent full height of $r_2$ does not intersect the shaded region. But  $C(r_1,\rho)$ and  $C(r_2,\rho)$ intersect with each other. 

\begin{definition} {\bf [$\delta$-height independent covered:] }
 A region is said to be $\delta$-height independent covered by a set of sensors if within the region at any arbitrary position a vertical line segment of height $\delta$ is always completely inside sensing range of at least one sensor. 
\end{definition}

\begin{lemma}
If $r_i$ is a road segment in $L_h$ of width $w$ and its right end boundary is left most then $w$-height independent covering right-cap $RCap(r_i,2\rho)$ {\em implies independent road covering} $r_i$ together with any road segment $r_j \in L_h$ such that $C(r_i, \rho) \cap C(r_j, \rho) \ne \emptyset$.
\end{lemma}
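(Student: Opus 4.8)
The plan is to chain together the three immediately preceding lemmas. Fix the left‑most road segment $r_i$ in $L_h$ and suppose $RCap(r_i, 2\rho)$ is $w$‑height independent covered by the deployed sensors. I must show that for any $r_j \in L_h$ with $C(r_i,\rho) \cap C(r_j,\rho) \ne \emptyset$, both $r_i$ and $r_j$ are independent road covered. First I would handle $r_i$ itself: since $r_i$ precedes itself trivially (or simply because the right‑cap of $RCap(r_i,2\rho)$ contains the right portion of $r_i$), there is a full‑width, height‑$w$ vertical slice of $r_i$ lying inside $RCap(r_i,2\rho)$; by the $\delta$‑height independent covering hypothesis with $\delta = w$, this slice lies entirely inside the sensing disk of a single sensor $s_k$. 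That disk therefore meets both side boundaries $l_i^t$ and $l_i^b$, so by the very first lemma of the measuring section, $s_k$ independently road covers $r_i$.

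Next I would do the same for $r_j$. By Lemma~\ref{lem:bb}, since $r_i$ precedes $r_j$ in $L_h$ and $C(r_i,\rho)\cap C(r_j,\rho)\ne\emptyset$, some portion of $r_j$'s length but its full width lies completely inside $RCap(r_i,2\rho)$. In particular there is an axis‑parallel vertical segment of height exactly $w$ (the full width of $r_j$) sitting at some $x$‑position inside $RCap(r_i,2\rho)$ and forming a cross‑section of $r_j^a$. Again invoking the $w$‑height independent covering of $RCap(r_i,2\rho)$, this segment lies inside the sensing disk of a single sensor $s_\ell$; that disk meets both $l_j^t$ and $l_j^b$, so by the same lemma $s_\ell$ independently road covers $r_j$. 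Hence every $r_j$ with $C(r_i,\rho)\cap C(r_j,\rho)\ne\emptyset$ — i.e.\ exactly the set of road segments removed from $L_h$ together with $r_i$ in one iteration of Algorithm~\ref{algo:SensorDeployment} — is independent road covered, which is the claim.

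The one place that needs care, and which I expect to be the main obstacle, is making sure the vertical segment produced by Lemma~\ref{lem:bb} really is a legitimate witness for the $\delta$‑height independent covered definition — that is, it is a \emph{vertical} line segment of height $w$ lying inside the region $RCap(r_i,2\rho)$. Lemma~\ref{lem:bb} gives that the full width (the left end boundary, a segment of length $w$) of $r_j$ is inside $RCap(r_i,2\rho)$; since $r_j$ is a horizontal road segment its left end boundary is indeed a vertical segment of height exactly $w$, so this matches the definition directly, but I would state this explicitly. A secondary subtlety is the degenerate/boundary behaviour when $r_j$'s left end lies before $r_i$'s right end (case (i) of Lemma~\ref{lem:bb}): there the lemma gives the \emph{whole} height of $r_j$ inside the cap over the relevant $x$‑range, so any vertical cross‑section of $r_j$ within that range works, and the argument goes through unchanged. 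No new geometry is required beyond citing Lemma~\ref{lem:bb}, Lemma~\ref{lem:aa}, and the intersection‑characterisation lemma from Section~\ref{sec:measuring_road_coverage}.
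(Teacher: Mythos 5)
Your proof is correct and follows essentially the same route as the paper's: it chains Lemma~\ref{lem:aa} and Lemma~\ref{lem:bb} with the definition of $w$-height independent covering to conclude that a full-width vertical slice of each relevant $r_j$ (and of $r_i$ itself) lies in a single sensor's disk. You are in fact somewhat more careful than the paper, which leaves implicit both the coverage of $r_i$ itself and the final step that a disk containing a height-$w$ cross-section meets both side boundaries and hence independently road covers the segment.
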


% $r$ such that capsule $C(r, \rho)$ intersects with $C(r_1, \rho)$ and

\begin{proof}
According to Lemma \ref{lem:aa}, road segments $r_i$ and any other road segment $r_j$ can be  independently road covered by a sensor if and only if $C(r_i,\rho)$ and  $C(r_j,\rho)$ intersect with each other. Again $C(r_i,\rho)$ and  $C(r_j,\rho)$ intersect with each other and $r_i$'s right end boundary is leftmost, therefore according to Lemma \ref{lem:bb}, some part of $r_j$'s length but full width of $r_j$ must be completely inside right-cap $RCap(r_i,2\rho)$ (shaded region in Figure \ref{fig:hz_road_ind_cover}(b) for $r_i =r_1$). Therefore, if right-cap $RCap(r_i,2\rho)$ is $w$-height independent covered then road segment $r_i$ together with all other road segments whose capsules intersect $C(r_i,\rho)$ is automatically {\em independent road covered}. 

%  If full height of $r_j$ does not inside the shaded region then there exist a line from left end boundary to right end boundary within $r_j$ which does not intersect the shaded region. 
\end{proof}

\begin{lemma}
\label{lem:cc}
Four sensors with sensing range $\rho \ge w$ are sufficient to $w$-height independent covering right-cap $RCap(r_i,2\rho)$ of road segment $r_i$ of width $w$.
\end{lemma}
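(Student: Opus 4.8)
The plan is to analyze the shape of $RCap(r_i, 2\rho)$ explicitly and then exhibit four sensor positions that together $w$-height independent cover it. Recall that $RCap(r_i, 2\rho)$ is the right cap of the capsule $C(r_i, 2\rho)$, i.e., the part of the intersection of the Minkowski sums of a disk of radius $2\rho$ on $l_i^t$ and on $l_i^b$ that lies to the right of the right end boundary of $r_i$. Since $r_i$ is horizontal with width $w$, the right cap is the intersection of two disks of radius $2\rho$ centered at the two right corner points of $r_i$ (call them $a$, the top corner, and $b$, the bottom corner, with $|ab| = w$), restricted to the half-plane to the right of the segment $ab$. So the region to be covered is a lens-shaped region whose widest vertical extent is $w$ (along the segment $ab$) and which tapers as we move right, vanishing at horizontal distance roughly $\sqrt{4\rho^2 - (w/2)^2}$ from the midpoint of $ab$.

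**Placing the four sensors.**
First I would bound the horizontal extent $\ell$ of $RCap(r_i,2\rho)$ to the right of $ab$; since $\rho \ge w$, this is at most $2\rho$ (in fact $< 2\rho$, and one can use $\ell \le \sqrt{4\rho^2 - w^2/4} \le 2\rho$). The idea is to partition $RCap(r_i,2\rho)$ into four vertical strips by three cuts, or more simply to cover it with four sensing disks of radius $\rho$. Because at every horizontal coordinate the vertical extent of the cap is at most $w \le \rho$, a single disk of radius $\rho$ centered on the horizontal centerline of the cap covers a full-height vertical segment of length $w$ wherever the disk's horizontal footprint has half-width at least $\sqrt{\rho^2 - (w/2)^2 - (\text{distance from center})^2}$ — so each disk $w$-height covers a horizontal interval of length at least, say, $2\sqrt{\rho^2 - w^2/4} \ge 2\sqrt{\rho^2 - \rho^2/4} = \rho\sqrt 3$ worth of the centerline. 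I would then place four such disks with centers equally spaced along the centerline so their $w$-height-covered intervals overlap and their union spans the whole horizontal extent $\ell < 2\rho < 4 \cdot \tfrac{\rho\sqrt3}{2}$ of the cap; checking $\ell$ against $4$ times the per-disk guaranteed coverage length (shifted so consecutive intervals abut) is the arithmetic that closes the argument. One must also handle the near-vertical boundary arcs at the left end (the segment $ab$) and at the far right tip, but at those places the cap's height is already $\le w$ and shrinking, so a disk centered nearby covers it.

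**The main obstacle.**
The hard part will be the bookkeeping that a disk of radius $\rho$ centered at a point on the centerline genuinely $w$-height covers the cap — not merely a point — over a long enough horizontal interval, and that four such intervals suffice to cover the full length $\ell$ of $RCap(r_i,2\rho)$. Concretely, for a vertical segment of height $w$ centered at height $0$ to lie inside $circle(s,\rho)$, with $s$ at horizontal distance $d$ from the segment, one needs $d^2 + (w/2)^2 \le \rho^2$, i.e. $d \le \sqrt{\rho^2 - w^2/4}$; so a single disk $w$-height covers a horizontal window of length $2\sqrt{\rho^2 - w^2/4}$ around its center. Using $\rho \ge w$ gives $2\sqrt{\rho^2 - w^2/4} \ge \rho\sqrt3$, so four abutting windows cover length $\ge 4\rho\sqrt3 \ge \ell$ since $\ell \le \sqrt{4\rho^2 - w^2/4} < 2\rho < 4\rho\sqrt3$. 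I would present this as: choose the four centers at horizontal offsets $\tfrac{(2k-1)}{2}\cdot\tfrac{\ell}{4}$ for $k=1,\dots,4$ along the centerline of the cap (clamped to stay in the cap), verify each covers its quarter-strip since $\tfrac{\ell}{8} < \sqrt{\rho^2 - w^2/4}$, and conclude. The only genuinely delicate point is confirming that the \emph{entire} cap — including its curved upper and lower boundaries, which bow inward relative to a rectangle of height $w$ — is captured; this follows because the cap is contained in the $\ell \times w$ axis-aligned rectangle with left edge $ab$, and that rectangle is what the four disks cover.
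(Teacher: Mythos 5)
There is a genuine gap, and it originates in a mis-identification of the region to be covered. You correctly note that $RCap(r_i,2\rho)$ is the part of the intersection of the two disks of radius $2\rho$ centred at the two right corner points of $r_i$ that lies to the right of the right end boundary, but you then assert that its ``widest vertical extent is $w$'' and that it fits inside an $\ell\times w$ rectangle. That is false. Put the midpoint of the right end boundary at the origin, so the corners are $(0,\pm w/2)$: the disk centred at $(0,-w/2)$ reaches up to $y=2\rho-w/2$ and the disk centred at $(0,w/2)$ reaches down to $y=w/2-2\rho$, so along the line $x=0^{+}$ the cap has height $4\rho-w\ge 3w$, and it only tapers to zero at horizontal distance $\sqrt{4\rho^2-w^2/4}$. (This tall lens is exactly what Lemma~\ref{lem:bb} requires: the full width of a neighbouring road segment $r_j$ may sit anywhere inside it, including near its top or bottom.) Consequently your placement of all four centres on the horizontal centreline cannot work: take the vertical segment of height $w$ just inside the cap whose top endpoint is near $(0,\,2\rho-w/2)$. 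That endpoint is at distance at least $2\rho-w/2\ge \tfrac{3}{2}\rho>\rho$ from every point of the centreline (using $w\le\rho$), so it lies outside every one of your four sensing disks. Your per-disk ``window'' computation ($d\le\sqrt{\rho^2-w^2/4}$) is correct but only applies to segments vertically centred at the disk's own height; it says nothing about the upper and lower lobes of the cap, which your construction misses entirely.

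The paper's proof succeeds precisely by not keeping the four sensors collinear: $s_{i1}$ and $s_{i4}$ lie on the centreline, while $s_{i2}$ and $s_{i3}$ are placed symmetrically above and below it on the perpendicular bisector of $s_{i1}s_{i4}$, at a distance tuned so that the critical vertical gaps ($ad$, $ef$, $bg$) all have length exactly $w$ and $hc=2\rho$; the three sensors $s_{i1},s_{i2},s_{i4}$ then jointly $w$-height cover the upper half of the cap and $s_{i1},s_{i3},s_{i4}$ the lower half. To repair your argument you would need a placement that accounts for the $4\rho-w$ vertical spread of $RCap(r_i,2\rho)$, not merely its horizontal length.
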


\begin{proof}
If four sensors $\{s_{i1}, s_{i2},s_{i3}, s_{i4} \}$ are placed as in Figure \ref{fig:hz_road_ind_cover}(c), then right-cap $RCap(r_i,2\rho)$ is $w$-height independent covered. The detail proof is discussed in Appendix section.
\end{proof}

\begin{figure}[h]
\centering
\includegraphics [width=8cm]{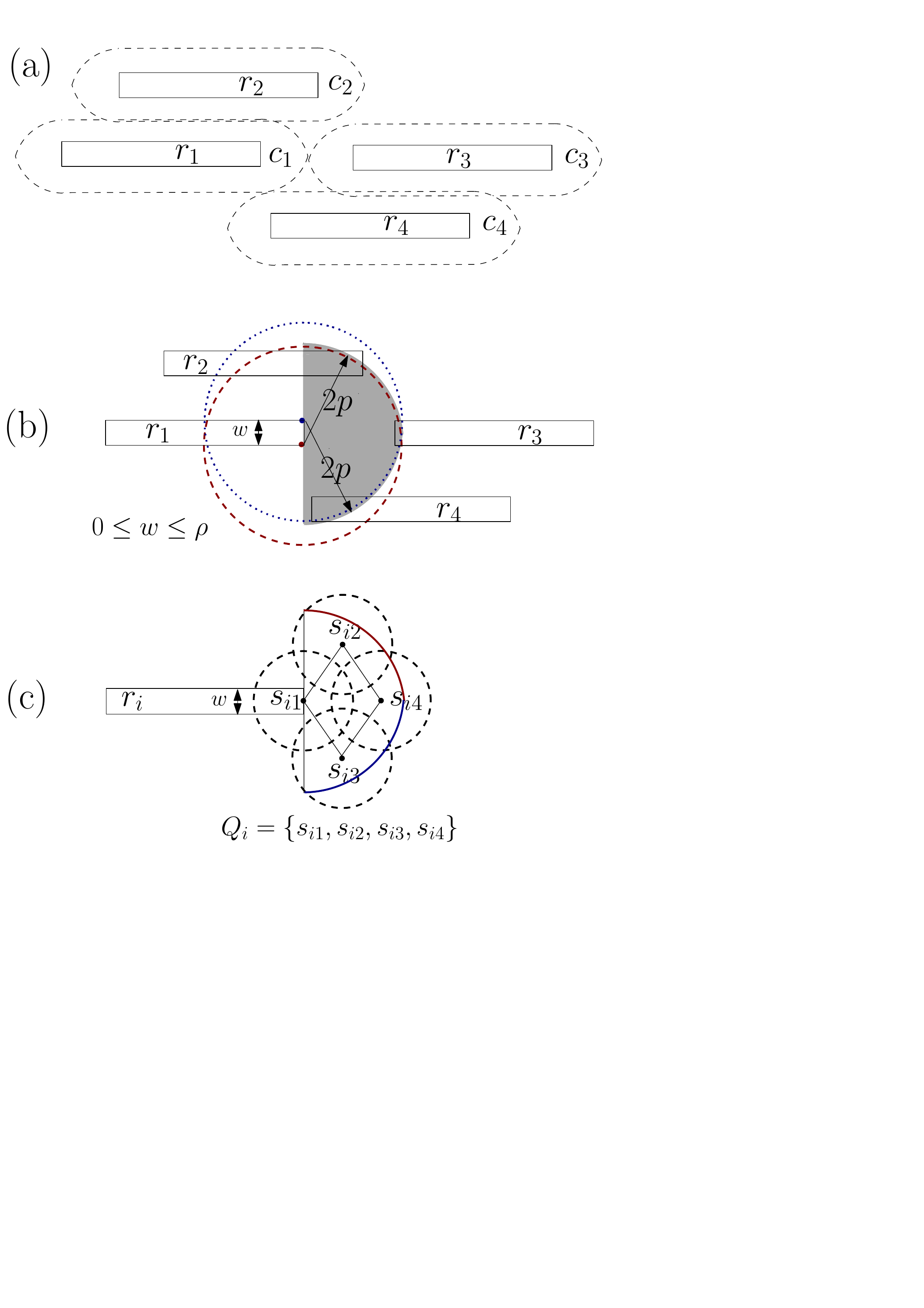}
\caption{Independent road covering horizontal road segments}
\label{fig:hz_road_ind_cover}
\end{figure}

\begin{theo}
\label{th:gg}
All axis parallel road segments can be independent road covered in $O(n^2)$ time and the number of sensors used is $\le 8 OPT$
\end{theo}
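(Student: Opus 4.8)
The plan is to establish the two claims of the theorem separately: first the running time $O(n^2)$, then the approximation ratio $\le 8\,OPT$. Both follow by combining Algorithm~\ref{algo:SensorDeployment} with the structural lemmas already proved. For the running time, I would observe that sorting $R_h$ (and $R_v$) by right-end x-coordinate costs $O(n\log n)$, and the main \texttt{while} loop iterates over the chosen representatives; in each iteration the inner loop scans the remaining segments of $L_h$ once to test capsule intersection $C(r_i,\rho)\cap C(r_j,\rho)\ne\emptyset$, which is a constant-time geometric test per pair. Since every segment is removed from $L_h$ at most once and each removal is charged a constant-time test against the current representative, the total work across all iterations is $O(n)$ per representative in the worst case and $O(n^2)$ overall. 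Deploying the (at most) four sensors $Q_i$ per representative via Lemma~\ref{lem:cc} is $O(1)$ each, adding $O(n)$. Running the same procedure on $R_v$ doubles the cost. Hence the total is $O(n^2)$.

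For the approximation ratio I would argue in two stages. \emph{Correctness (feasibility):} the set $I_h$ of chosen representatives has the property that every road segment $r_j\in R_h$ either lies in $I_h$ or was removed because its capsule intersects the capsule of the representative $r_i$ that preceded it. By the lemma preceding Lemma~\ref{lem:cc}, $w$-height independent covering $RCap(r_i,2\rho)$ independently road covers $r_i$ together with every such $r_j$; and by Lemma~\ref{lem:cc}, the four sensors $Q_i$ placed at the right end of $r_i$ achieve exactly this $w$-height independent covering of $RCap(r_i,2\rho)$. So all of $R_h$ is independent road covered by $Q_h$, and symmetrically all of $R_v$ by $Q_v$; thus $Q=Q_h\cup Q_v$ is feasible. \emph{Counting:} let $OPT$ be the optimum over all of $R$, and let $OPT_h$, $OPT_v$ be optima for $R_h$ and $R_v$ alone; clearly $OPT_h\le OPT$ and $OPT_v\le OPT$. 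The key step is a lower bound: the representatives in $I_h$ are pairwise capsule-disjoint (whenever we pick $r_i$, we delete every later segment whose capsule meets $C(r_i,\rho)$, so two surviving representatives have disjoint capsules), and by Lemma~\ref{lem:aa} no single sensor can independent road cover two segments whose capsules are disjoint. Hence any feasible solution — in particular the optimal one for $R_h$ — must use at least one distinct sensor per representative, giving $|I_h|\le OPT_h$. Since $|Q_h|\le 4|I_h|\le 4\,OPT_h\le 4\,OPT$ and likewise $|Q_v|\le 4\,OPT$, we get $|Q|\le 8\,OPT$.

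The main obstacle — and the place where the argument must be watched carefully — is the lower-bound step: concluding $|I_h|\le OPT_h$ requires that the greedy deletion really does leave a set of representatives whose capsules are \emph{pairwise} disjoint, not merely disjoint from the immediately preceding representative. This needs the sorted order: when $r_i$ is selected it is the uninspected segment with leftmost right-end boundary, and the inner loop removes \emph{all} remaining $r_j$ with $C(r_i,\rho)\cap C(r_j,\rho)\ne\emptyset$, so any representative chosen later had its capsule disjoint from $C(r_i,\rho)$ at the moment $r_i$ was processed — and capsules are fixed geometric objects, so this disjointness is permanent. Combined with Lemma~\ref{lem:aa} this yields a packing lower bound of $|I_h|$ on $OPT_h$. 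A secondary point to verify is that partitioning into $R_h$ and $R_v$ and taking the union of the two independent solutions is legitimate: feasibility is additive (a sensor set that independent road covers $R_h$ loses nothing by adding more sensors), and the factor-$2$ loss from $OPT_h+OPT_v\le 2\,OPT$ is exactly what turns the per-orientation factor $4$ into the overall factor $8$.
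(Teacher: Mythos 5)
Your proposal is correct and follows essentially the same route as the paper: greedy selection of representatives $I_h$, $I_v$ with pairwise-disjoint capsules, the packing lower bound $|I_h|\le OPT$ via Lemma~\ref{lem:aa}, the four-sensor covering of $RCap(r_i,2\rho)$ from Lemma~\ref{lem:cc} for feasibility, and the bound $4(|I_h|+|I_v|)\le 8\,OPT$. You are somewhat more explicit than the paper about why the surviving representatives' capsules are \emph{pairwise} disjoint and why the union of the two per-orientation solutions is feasible, but the decomposition and key lemmas are identical.
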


\begin{proof}
According to Algorithm \ref{algo:SensorDeployment}, at least one sensor is required to independent road cover a road segment $r_i \in I_h$. The capsules corresponding road segments in $I_h$ are disjoint. Therefore, to independent road cover all the road segments in $R_h$  at least $|I_h|$ sensors are required,  because $I_h \subseteq R_h$. But, our algorithm uses at most $|Q_h|=4|I_h|$ sensors to cover all the road segment in $R_h$. Hence, our algorithm uses at most four times more than the optimal number of sensors require to independent road cover all horizontal road segments in $R_h$. Same process is repeated for covering vertical road segments in $R_v$. Therefore, $|Q_v|=4|I_v|$ sensors are used to cover vertical road segments in $R_v$. Let $OPT$ denote optimal number of sensors require to cover all the road segments in $R$. Therefore, $OPT \ge Max(|I_h|, |I_v|)$.  Our algorithm uses in total $|Q| = |Q_h| + |Q_v|= 4(|I_h|+|I_v|) \le 8 Max(|I_h|, |I_v|) \le  8 OPT$ sensors. Hence, in total our algorithm uses at most eight times more than the minimum number of sensors required to cover all axis parallel road segments in $R$. 

There is a nested while loop in Algorithm \ref{algo:SensorDeployment}, which runs at most $n^2$ times, where $n$ denotes the number of road segments. Therefore, the time complexity of the algorithm is $O(n^2)$.
\end{proof}

Although a road segment is get covered if a sensor is placed anywhere inside the bounding capsule. But, in practice the ends of road segments are junctions (start or end of the road). Therefore, in general sensors are not deployed on the roads or end of the roads. In the next section, we discuss sensor deployment algorithm to cover road segments where sensors are deployed only along the side boundaries of the road segments ($l_i^t$ or $l_i^b$ for  road segment $r_i$).

\remove{
 %%%%%%% Begin Special case considered for the above problem %%%%%%%  
 
%{\bf Special Case :} The approximation factor of the above algorithm can be improved to .......... 8-factor for a special case of the above problem where the length of the road segments are greater than or equal to twice of the sensing range. In this case, we also  improve the time complexity from $O(n^2)$ to $O(n log n)$.
%
%The road covering method for horizontal road segments is as follow :
%
%We split the horizontal road segments into disjoint groups by drawing horizontal lines with vertical separation $(2p-w)$ ( height of the capsules) from the bottom most horizontal road segment until the top most horizontal road segment intersects a line. Therefore, all the horizontal capsules corresponding to the road segments intersect exactly one horizontal lines. Now start covering road segments from bottom to top until all the road segments are get covered.
%
%Deploy sensors to cover the road segments whose corresponding capsules intersects the bottom horizontal line. Among these capsules choose one whose right end point is leftmost and place one sensor at the intersection point between right end boundary of the road segment and the horizontal line. This sensor senses all other road segments whose left end boundary starts before the left most right end boundary.
%%%% %%%%%%%% End Special Case %%%%%%%%%%%%%%%%%%%
}

%%%%%%%%%%%%%%%%%%%%%%%%%%%%%%%%%%%%%%%

\subsection{Approximation algorithm for sensor deployment along the side boundary}
\label{ssec:4-factor}

In this subsection, we present {\em independent road covering} algorithm for axis parallel road segments, where sensors are placed only along the side boundaries of the road segments. We assume sensors sensing regions are disks of equal radius $\rho \ge w$. First, we describe a 2-factor approximation algorithm for horizontal road segments. The same technique is used to cover vertical road segments as in the previous algorithm. These two solutions are combined to {\em independent  road cover} all axis parallel road segments and together produces a 4-factor approximation algorithm. The algorithm for road covering horizontal road segments works as follows:

% Each member in $R_h$ is associated with its left-top and right-bottom corner points  coordinates.
Given a set of  horizontal road segments $R_h$. First sort the road segments in $R_h$ according to their right end boundary's $x$-coordinates value and store them in a list named $L_h$. Next, select a road segment $r_i$ having left-most right end boundary. Put $r_i$ in another list $I_h$, and then remove $r_i$ from $L_h$ and any other road segment $r_j$ from $L_h$, which intersects $RCap(r_i, \rho)$ or $RCap(r_i, \rho) \cap r_i^a \ne \emptyset$ . In others words, remove all the road segments from $L_h$ which are independent {\em road covered} by any one of the two sensors  deployed at top and bottom right corners of $r_i$. Place two sensors at the two right corners of the road segment $r_i$ as in Figure \ref{fig:coveringRoads}(b) and call them as  $Q_i= \{s_{i1}, s_{i2} \}$. Sensor used by the algorithm are stored cumulatively in a list called $Q_h$,  which is updated in each iteration by $Q_h= Q_h \cup Q_i$. Repeat the above process for the remaining road segments in $L_h$ until $L_h$ becomes empty.

\begin{lemma} 
\label{lem:ee}
Two horizontal road segments $r_i$ and $r_j$ can be independent road covered by a single sensor $s_k$ with sensing range $\rho$ if and only if $r_j$'s top side boundary $l_j^t$  or bottom side boundary $l_j^b$ intersects the capsule $C(r_i,\rho)$ or vice versa. Assuming sensors can be placed only on the top or bottom side boundary of the horizontal road segments.
\end{lemma}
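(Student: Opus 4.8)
The plan is to reduce the statement to Observation \ref{obs:1} together with one elementary geometric fact about capsules, in the same spirit as the unconstrained version Lemma \ref{lem:aa}. Recall that by Observation \ref{obs:1} a sensor $s_k$ with range $\rho \ge w$ independent road covers $r_i$ (resp. $r_j$) if and only if $s_k$ lies in the capsule $C(r_i,\rho)$ (resp. $C(r_j,\rho)$); hence a single sensor covers both road segments exactly when it lies in $C(r_i,\rho)\cap C(r_j,\rho)$. Since in this subsection sensors are restricted to the side boundaries of horizontal road segments, the only relevant placements for covering the pair $\{r_i,r_j\}$ are points of $l_i^t\cup l_i^b$ or of $l_j^t\cup l_j^b$, so I will work with that restriction throughout.

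The key preliminary fact I would establish is that every point of a side boundary of a road segment already lies inside that road segment's own capsule: if $p\in l_j^t$ then, because $l_j^t$ and $l_j^b$ are the two equal-length parallel sides of the rectangle $r_j^a$ at perpendicular separation $w$, the foot of the perpendicular from $p$ onto the supporting line of $l_j^b$ lies on the segment $l_j^b$ and is at distance $w\le\rho$ from $p$; thus $p$ lies in the Minkowski sum of a radius-$\rho$ disk with $l_j^b$, and trivially in the one with $l_j^t$, so $p\in C(r_j,\rho)$. The same holds for $p\in l_j^b$, and symmetrically for $r_i$.

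With this in hand both directions are short. For ($\Leftarrow$), assume without loss of generality that $l_j^t\cap C(r_i,\rho)\neq\emptyset$ (the other three symmetric cases are the same) and pick $p$ in that intersection; then $p\in C(r_i,\rho)$ by assumption and $p\in C(r_j,\rho)$ by the preliminary fact, while $p$ is a legal sensor location since it lies on a side boundary of $r_j$, so a sensor placed at $p$ independent road covers both $r_i$ and $r_j$ by Observation \ref{obs:1}. For ($\Rightarrow$), given a single sensor $s_k$ on a side boundary of $r_i$ or $r_j$ covering both, Observation \ref{obs:1} gives $s_k\in C(r_i,\rho)\cap C(r_j,\rho)$; if $s_k$ lies on $l_j^t$ or $l_j^b$ then that boundary meets $C(r_i,\rho)$ (witnessed by $s_k$), and if $s_k$ lies on $l_i^t$ or $l_i^b$ then that boundary meets $C(r_j,\rho)$, which is the ``vice versa'' alternative.

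The main obstacle I anticipate is not in the logical skeleton but in pinning down the geometric claim of the second paragraph cleanly — namely that membership of a boundary point in the \emph{opposite} Minkowski sum genuinely uses the rectangular alignment of $l_j^t$ and $l_j^b$ and the hypothesis $\rho\ge w$ — and in being explicit that ``can be covered by one sensor'' is read relative to the on-boundary placement model of this subsection, so that the sensor witnessing the forward direction indeed sits on a side boundary of one of the two road segments in question rather than of some unrelated segment.
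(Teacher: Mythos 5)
Your proposal is correct and follows essentially the same route as the paper: both reduce the claim to Observation \ref{obs:1} together with the restriction that sensors sit on side boundaries, concluding that a covering sensor on a boundary of one segment must lie in (and hence witness intersection with) the other segment's capsule. You are in fact slightly more complete than the paper, which argues only the forward implication and leaves implicit the fact you prove explicitly --- that a point on $l_j^t$ or $l_j^b$ automatically lies in $C(r_j,\rho)$ because $w\le\rho$ --- which is exactly what the reverse direction needs.
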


\begin{proof}
According to observation \ref{obs:1}, road segment $r_i$ can be road covered by sensor $s_k$ when  sensor $s_k$ is placed inside the capsule $C(r_i,\rho)$. Sensors are restricted  to place only on the side boundary of the road segments. Therefore, to cover both $r_i$ and $r_j$ by a single sensor $s_k$, either $s_k$ is placed on the $l_i^t$ or $l_i^b$ and $s_k$ must be inside $C(r_j,\rho)$ or vice versa. Hence, $C(r_j,\rho)$ must intersects one of the two side boundaries of $r_i$. Similarly, when $s_k$ is placed on one of the two side boundaries of $r_j$ then $C(r_i, \rho)$ must intersects one of the two side boundaries of $r_j$. It proves the lemma.
\end{proof}

\begin{lemma}
\label{lem:dd}
 If $l_j^t$ or $l_j^b$ of a road segment $r_j$ intersects right-cap $RCap(r_i,\rho)$ then both $l_j^t$ and $l_j^b$  of $r_j$ intersect one of the circles centered at two right corners of $r_i$ with radius $\rho$. 
\end{lemma}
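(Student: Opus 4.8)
We have a horizontal road segment $r_i = (l_i^t, l_i^b, w)$, axis-parallel, so its right end boundary is a vertical segment of length $w$ with two corner points, say $c_i^t$ (top-right corner, lying on $l_i^t$) and $c_i^b$ (bottom-right corner, lying on $l_i^b$). Let $D^t = circle(c_i^t, \rho)$ and $D^b = circle(c_i^b, \rho)$ be the two disks of radius $\rho$ centered at these corners. Recall that $RCap(r_i,\rho)$ is the right cap of the capsule $C(r_i,\rho)$; by the definition of the capsule as an intersection of Minkowski sums of a $\rho$-disk with $l_i^t$ and with $l_i^b$, the right cap is exactly the region $D^t \cap D^b$ that lies to the right of the right end boundary of $r_i$ (the "lune" cut off by the two corner disks). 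So the hypothesis is that $l_j^t$ or $l_j^b$ meets this lune $D^t \cap D^b$, and the conclusion is that both $l_j^t$ and $l_j^b$ meet $D^t \cup D^b$ (in fact the claim as phrased wants each of the two horizontal segments to intersect one of the two corner circles — I read "one of the circles" distributively: $l_j^t$ hits $D^t$ or $D^b$, and likewise $l_j^b$).

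**The plan.**

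First I would pick a point $p$ in the intersection, say $p \in l_j^t \cap RCap(r_i,\rho)$ (the case $p \in l_j^b$ is symmetric). Since $p \in RCap(r_i,\rho) \subseteq D^t \cap D^b$, we have $|p c_i^t| \le \rho$ and $|p c_i^b| \le \rho$ simultaneously. Now use that $r_j$ is horizontal of width $w$ and that $\rho \ge w$: the vertical distance between $l_j^t$ and $l_j^b$ is exactly $w$, so the point $q$ on $l_j^b$ directly below (or above) $p$ satisfies $|pq| = w \le \rho$. The key step is then a triangle-inequality / monotonicity argument: one of the two corner centers $c_i^t, c_i^b$ is on the same vertical side of the line containing $l_j^t$ as $q$ is, and moving from $p$ toward $q$ (a purely vertical move of length $\le w$) does not increase the distance to that center beyond $\rho$. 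Concretely, whichever of $c_i^t$, $c_i^b$ lies in the closed half-plane below $l_j^t$ (there is at least one, unless $r_i$'s whole right edge lies strictly above $l_j^t$, a boundary subcase I'd handle separately using that the lune sits between the two corners) — call it $c$ — satisfies $|qc| \le \max(|pc|, \text{something}) \le \rho$ by the observation that for a fixed point $c$ and a vertical segment $[p,q]$, the distance $|xc|$ is maximized at an endpoint, and here the far endpoint is $p$ with $|pc| \le \rho$. Hence $q \in D^t \cup D^b$, so $l_j^b$ meets one of the corner circles; and $p$ itself witnesses that $l_j^t$ meets one of them.

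**The main obstacle.**

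The delicate part is the case analysis on the relative vertical positions of the two segments: $l_j^t, l_j^b$ versus $l_i^t, l_i^b$. When the intersecting segment $l_j^t$ passes \emph{between} the two corners $c_i^t$ and $c_i^b$ vertically, the argument is clean — one corner is above, one below, and the vertical stab of width $w$ from $l_j^t$ to $l_j^b$ is trapped within the $\rho$-reach of whichever corner is on the correct side. The annoying subcase is when $l_j^t$ lies entirely above $c_i^t$ (or $l_j^b$ entirely below $c_i^b$); then one has to invoke $\rho \ge w$ more carefully, using that $p$ being in the lune forces $p$ to be within vertical distance... at most $\rho$ of $c_i^t$, and the $w$-wide band of $r_j$ still fits. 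I would organize the proof around "which corner disk contains the witness point $p$" and then propagate to the other side boundary of $r_j$ by the length-$w$ vertical move, citing $\rho \ge w$ at exactly that point. Everything else is elementary planar distance bookkeeping, so I would not belabor it.
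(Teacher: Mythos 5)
Your setup is right: the right-cap is the lune cut off by the two disks of radius $\rho$ centered at the right corners of $r_i$, a witness point $p$ of the hypothesized intersection lies within distance $\rho$ of \emph{both} corners, and the companion point on the other side boundary of $r_j$ is the vertical translate of $p$ by $w$. But the step that carries the whole lemma --- getting that translate $q$ into one of the corner disks --- is not correctly justified. Your claim that, for the corner $c$ chosen to lie in the closed half-plane below $l_j^t$, the distance to $c$ along the vertical segment $[p,q]$ is maximized at $p$ is false in general: if $c$ sits strictly inside the horizontal band of $r_j$ and above its midline, then $|qc|>|pc|$; convexity of the distance function only says the maximum is at \emph{an} endpoint, not at $p$. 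Your own ``$\max(|pc|,\text{something})$'' and the deferred boundary subcase (both corners above $l_j^t$) are exactly where the argument is incomplete, and any patch along these lines ends up needing the membership of $p$ in \emph{both} corner disks simultaneously --- something your single-disk monotonicity argument never invokes. As written, the proof does not go through.

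The paper closes the gap with one observation you missed: $r_i$ and $r_j$ have the \emph{same} width $w$, so the vector from $l_j^b$ up to $l_j^t$ equals the vector from the bottom right corner $c$ of $r_i$ up to the top right corner $b$. Hence if $e\in l_j^b\cap RCap(r_i,\rho)$ and $f$ is its vertical translate on $l_j^t$, then $b$, $c$, $e$, $f$ form a parallelogram and $|bf|=|ce|\le\rho$ exactly; since $e$ itself already lies in $circle(b,\rho)$ (the cap is contained in both corner disks), both side boundaries of $r_j$ meet the \emph{same} circle $circle(b,\rho)$. That stronger ``same circle'' form is what Lemma \ref{lem:ff} actually needs, since a single sensor at that corner must independently road cover $r_j$; your distributive reading (each boundary meets \emph{some} corner circle) would be too weak there if the two witnesses landed in different disks. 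The translation identity requires no case analysis and does not even use $\rho\ge w$ at this step; I would replace your monotonicity argument with it.
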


\begin{proof}
Assume $circle(\alpha,\rho)$ denotes a circle centered at point $\alpha$ with radius $\rho$. For the sake of contradiction assume that the bottom boundary of $r_j$ intersects  $RCap(r_i,\rho)$ at point $e$, as shown in Figure \ref{fig:coveringRoads}(b), but its top boundary does not intersect the $circle (b,\rho)$. Note arc $\arc{ob'}$ is part of $RCap(r_i,\rho)$  and $circle(c,\rho)$ and point $e$ is inside the $circle (b,\rho)$. Let $f$ be a point on the top boundary of the road segment $r_j$ which is on the perpendicular direction of $e$. Therefore, line segments $bc$ is parallel to $ef$ and is of equal length. Hence $ce=bf=\rho$. Therefore, $f$ must be on the circumference of $circle(b,\rho)$ and $e$ is already inside $circle(b,\rho)$. Therefore, both boundaries of $r_j$ intersect the circle centered at $b$, which contradicts our assumption. Hence the lemma is true.
\end{proof}
 
\begin{lemma} 
\label{lem:ff}

Let $r_i$ be a road segment in $L_h$, whose right end boundary is left most. Set of all horizontal road segments, whose side boundaries intersect the capsule $C(r_i,\rho)$, together with $r_i$ can be {\em independent road covered} by using only two sensors placed on the two right corner points of $r_i$. Assuming sensors can be placed only on side boundary of road segments.

\end{lemma}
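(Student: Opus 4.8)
The plan is to prove two things: (a) a single sensor at either right corner of $r_i$ already independent road covers $r_i$, and (b) every horizontal road segment $r_j$ one of whose side boundaries meets $C(r_i,\rho)$ is independent road covered by one of those two corner sensors. I would set the origin at the bottom-right corner of $r_i$, so that $b=(0,0)$ and $c=(0,w)$ are its bottom-right and top-right corners, $l_i^b\subseteq\{y=0\}$, $l_i^t\subseteq\{y=w\}$, and the right end boundary of $r_i$ is the vertical segment at $x=0$. Claim (a) is immediate from the verification criterion ``a sensor independent road covers a road segment iff its sensing disk meets both of that segment's side boundaries'': the sensor at $b$ lies on $l_i^b$, and its disk contains $c\in l_i^t$ since $|bc|=w\le\rho$; the sensor at $c$ is handled symmetrically.

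For (b), the crucial step --- and the one I expect to take the most care --- is to strengthen the hypothesis ``a side boundary of $r_j$ meets $C(r_i,\rho)$'' to ``a side boundary of $r_j$ meets the right cap $RCap(r_i,\rho)$'', i.e.\ the part of $C(r_i,\rho)$ lying in $\{x\ge 0\}$, using only that $r_i$ has the left-most right end boundary among the segments still in $L_h$. I would first pin down the capsule's shape. Because $C(r_i,\rho)$ is the intersection of the radius-$\rho$ Minkowski sums of $l_i^t$ and $l_i^b$, every point $(x,y)$ of it satisfies both $|y-w|\le\rho$ and $|y|\le\rho$, so $y\in[w-\rho,\rho]$, a nonempty interval since $\rho\ge w$; and since the perpendicular feet from $x=0$ onto $l_i^t$ and $l_i^b$ are exactly the endpoints $c$ and $b$, the cross-section of $C(r_i,\rho)$ at $x=0$ is the \emph{entire} segment $\{0\}\times[w-\rho,\rho]$. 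Now take $r_j$ (still in $L_h$, hence with right end boundary at $x$-coordinate $\ge 0$) and suppose $l_j^b$ meets $C(r_i,\rho)$ (the case of $l_j^t$ is symmetric); its constant height $y$ then lies in $[w-\rho,\rho]$, so $(0,y)\in C(r_i,\rho)$. If $r_j$'s left end has $x$-coordinate $\le 0$, then $(0,y)$ is a point of $l_j^b$ and hence of $l_j^b\cap RCap(r_i,\rho)$; and if $r_j$ lies entirely in $\{x\ge 0\}$, then any point of $l_j^b\cap C(r_i,\rho)$ already lies in $RCap(r_i,\rho)$. Either way a side boundary of $r_j$ meets $RCap(r_i,\rho)$.

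With that established, I would invoke Lemma~\ref{lem:dd}: once a side boundary of $r_j$ meets $RCap(r_i,\rho)$, \emph{both} $l_j^t$ and $l_j^b$ meet one of the two radius-$\rho$ circles centred at $b$ and at $c$ (the underlying facts are that every point of $RCap(r_i,\rho)$ lies within $\rho$ of both $b$ and $c$, and that a point $e$ on a side boundary of $r_j$ together with the point directly above or below it at distance $w$ are equidistant from $c$ and from $b$ respectively). Applying the verification criterion a second time, the sensor at that corner independent road covers $r_j$; together with (a) this gives the lemma --- the two sensors at the right corners of $r_i$ independent road cover $r_i$ together with every horizontal road segment whose side boundary meets $C(r_i,\rho)$. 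Apart from the capsule-shape step, the only routine care needed is to treat the degenerate cases of the verification criterion ($\rho=w$, or an entire side boundary contained in a disk) as coverage, which is legitimate because all the criterion really requires is that the sensing region contain part of each side boundary.
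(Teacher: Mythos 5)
Your proposal is correct and follows essentially the same route as the paper's proof: both observe that the corner sensors cover $r_i$ itself since $w\le\rho$, reduce intersection with $C(r_i,\rho)$ to intersection with $RCap(r_i,\rho)$ using the left-most-right-end property, and then invoke Lemma~\ref{lem:dd} together with the disk-meets-both-side-boundaries criterion. The only difference is that you explicitly justify the reduction to the right cap via the cross-section $\{0\}\times[w-\rho,\rho]$ of the capsule, a step the paper merely asserts.
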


\begin{proof}
Two sensors $s_{i1}$ and $s_{i2}$ are deployed at the two right corners of $r_i$ ($b$ and $c$), as shown in Figure \ref{fig:coveringRoads}(b). This is easy to follow that they are able to {\em independent road cover} the road segment $r_i$, since $w \le \rho$. On the other side, since $r_i$ has left most right end boundary, therefore all other road segments whose side boundaries intersect the capsule $C(r_i,\rho)$ must intersect right-cap $RCap(r_i,\rho)$ as shown in Figure \ref{fig:coveringRoads}(b). $RCap(r_i,\rho)$ consists of two circular arcs : upper arc $\arc{ob'}$ centered at $c$ and lower arc $\arc{oc'}$ centered at $b$. According to Lemma \ref{lem:dd}, if $l_j^b$ of  road segment $r_j$ intersects the  $RCap(r_i,\rho)$  then both $l_j^t$  and $l_j^b$ of $r_j$ also intersect one of the circles centered at $b$, $c$. Since upper arc $\arc{ob'}$ is part of $RCap(r_i,\rho)$, therefore both $l_j^t$  and $l_j^b$  intersect one of $circle(b,\rho)$, $circle(c,\rho)$.  This is trivial to show that if $l_j^b$ intersects arc $\arc{ob'}$  then both $l_j^t$  and $l_j^b$ of $r_j$ also intersect $circle(b,\rho)$  centered at $b$. Similar argument is applicable for the intersection of $l_j^t$  with the lower arc $\arc{oc'}$. Therefore, if two sensors are placed on the two right corners of $r_i$ then all the road segments, whose side boundaries intersect the capsule $C(r_i,\rho)$, can be independently road covered by one of the two sensors.
\end{proof}

\begin{figure}[t]
\centering
\includegraphics [width=8cm]{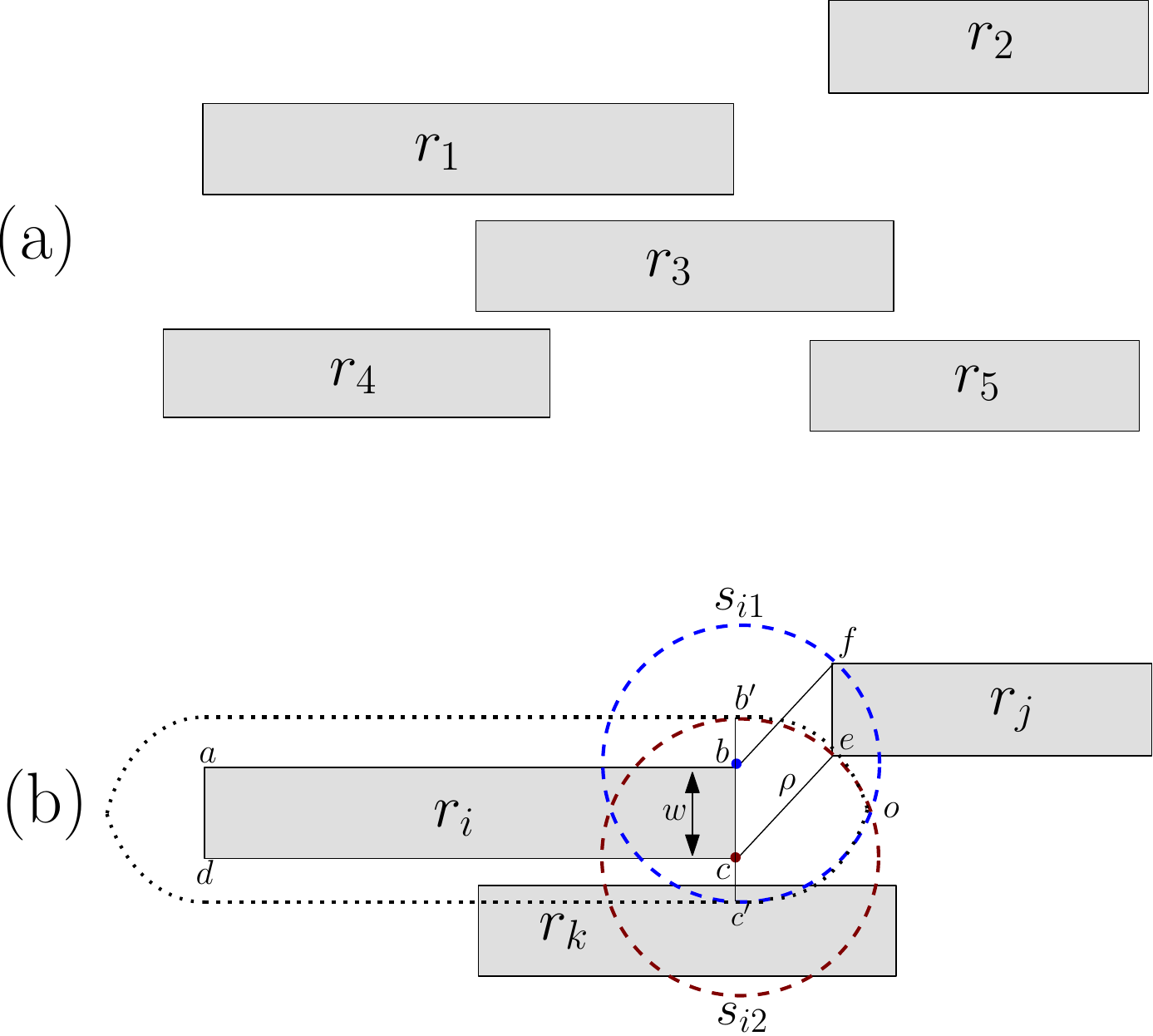}
\caption{(a) Horizontal road segments, (b) Processing of the road segment $r_i=r_1$, $r_j=r_2$, $r_k=r_3$ and $Q_i= \{ s_{i1}, s_{i2} \}$}
\label{fig:coveringRoads}
\end{figure}

\begin{lemma}
All horizontal road segments can be {\em independent road covered}  in $O(n^2)$ time and the number of sensors used is $\le 2 OPT$, for the case where sensors are allowed to place only on the bounding sides of the road segments. 
\end{lemma}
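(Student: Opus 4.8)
The plan is to mirror the structure of Theorem~\ref{th:gg} but with the tighter ``two sensors per representative'' bound established in Lemma~\ref{lem:ff}. First I would argue the approximation ratio. The algorithm builds a list $I_h$ of representative road segments by repeatedly picking the road segment $r_i \in L_h$ with leftmost right end boundary, placing two sensors at its right corners, and discarding every $r_j \in L_h$ whose side boundary meets $RCap(r_i,\rho)$ (equivalently, whose capsule meets $C(r_i,\rho)$, by Lemma~\ref{lem:dd} together with Lemma~\ref{lem:ee}). By the selection rule, the capsules $C(r_i,\rho)$ for $r_i \in I_h$ are pairwise disjoint: if two of them intersected, the later-processed one would have been removed when the earlier one was chosen. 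Since independent road covering a road segment forces a sensor inside its capsule (Observation~\ref{obs:1}), and a single sensor cannot lie in two disjoint capsules, any feasible deployment needs at least $|I_h|$ sensors to cover $R_h$; hence $OPT \ge |I_h|$ where $OPT$ is the optimum for all of $R$. The algorithm uses $|Q_h| = 2|I_h|$ sensors on the horizontal segments, and by Lemma~\ref{lem:ff} these genuinely independent road cover every segment of $R_h$ (each discarded $r_j$ is covered by one of the two corner sensors of its representative). Running the symmetric procedure on $R_v$ uses $|Q_v| = 2|I_v|$ sensors, with $OPT \ge |I_v|$ as well. Therefore the total is $|Q| = 2(|I_h| + |I_v|) \le 4\,Max(|I_h|,|I_v|) \le 4\,OPT$, giving the $2\,OPT$ bound when only horizontal segments are present and, as claimed in the lemma statement, a $2$-factor for the horizontal-only step.

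For the running time, sorting $R_h$ by right-end $x$-coordinate costs $O(n\log n)$. The main loop is a nested \texttt{while}: the outer loop extracts representatives, and the inner loop scans the remaining list to delete covered segments. Each road segment is removed from $L_h$ exactly once — either as a representative or as a covered segment — and each removal is charged a constant-time geometric test (does $l_j^t$ or $l_j^b$ intersect $RCap(r_i,\rho)$, or does $RCap(r_i,\rho)$ meet $r_i^a$). In the worst case, every outer iteration rescans an $\Omega(n)$-length list, so the loop is $O(n^2)$, which dominates the sort; hence $O(n^2)$ overall.

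The only real content beyond bookkeeping is the correctness of the deletion test — i.e., that ``side boundary of $r_j$ intersects $C(r_i,\rho)$'' is exactly the condition under which a corner sensor of the leftmost representative $r_i$ can be made to cover $r_j$ — and this has already been isolated in Lemmas~\ref{lem:ee}, \ref{lem:dd}, and~\ref{lem:ff}, so I would simply cite them. The one subtlety I would be careful about is the lower bound step: I must invoke Observation~\ref{obs:1} to convert ``$r_i \in I_h$ is covered'' into ``a sensor lies in $C(r_i,\rho)$,'' and then use that the $C(r_i,\rho)$ for $r_i \in I_h$ are pairwise disjoint \emph{as a consequence of the capsule-intersection characterization} (Lemma~\ref{lem:ee}), not merely of the $RCap$ test used in the code — so I would first note that $r_j$'s side boundary meeting $RCap(r_i,\rho)$ implies $C(r_i,\rho) \cap C(r_j,\rho) \ne \emptyset$, making the survivors in $L_h$ exactly those with capsules disjoint from $C(r_i,\rho)$. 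With that alignment in place, the disjointness of the representative capsules, and thus the lower bound $OPT \ge Max(|I_h|,|I_v|)$, follows immediately.
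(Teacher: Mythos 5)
There is a genuine gap in your lower-bound argument. You claim that the capsules $C(r_i,\rho)$ of the representatives in $I_h$ are pairwise disjoint, and you justify the alignment by noting that ``$r_j$'s side boundary meeting $RCap(r_i,\rho)$ implies $C(r_i,\rho)\cap C(r_j,\rho)\ne\emptyset$, making the survivors in $L_h$ exactly those with capsules disjoint from $C(r_i,\rho)$.'' That last step reverses the implication: the deletion rule removes only those $r_j$ whose \emph{side boundary} meets $C(r_i,\rho)$, which is a strictly stronger condition than the capsules merely intersecting. A segment can survive an iteration while its capsule still intersects $C(r_i,\rho)$ --- for instance, two horizontally overlapping parallel segments whose vertical gap $g$ satisfies $\rho-w < g \le 2\rho-2w$ have intersecting capsules, yet neither side boundary enters the other's capsule. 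So the representatives' capsules need not be disjoint, and the ``a single sensor cannot lie in two disjoint capsules'' step (via Observation~\ref{obs:1}) does not go through.

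The lower bound $OPT \ge |I_h|$ is still correct, but for the reason the paper uses: in this setting sensors are restricted to side boundaries, so by Lemma~\ref{lem:ee} a single sensor can cover two segments only if one segment's side boundary intersects the other's capsule. The algorithm deletes every such segment when a representative is chosen, so no two members of $I_h$ satisfy this condition and each requires its own sensor \emph{in any side-boundary-restricted deployment}. Note this bounds the restricted optimum, not the unrestricted one your disjointness argument was aiming at. With that repair, the rest of your proposal --- the $2|I_h|$ count from Lemma~\ref{lem:ff} and the $O(n\log n)$ sort plus $O(n^2)$ scan --- matches the paper's proof. (The discussion of $R_v$ and the combined $4$-factor belongs to the subsequent theorem, not this lemma, which concerns only $R_h$.)
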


\begin{proof}
According to our algorithm and Lemma \ref{lem:ff} no two road segments in $I_h$ can be {\em independently road covered} by a  sensor. Therefore, at least one sensor is required to {\em independent road cover} a road segment in $I_h$. Hence, in total at least $|I_h|$ sensors are necessary to cover all the road segments in $R_h$. Let $OPT$ denote the optimum number of sensors required to cover all the road segments in $R_h$. Since $I_h \subseteq R_h$ then $OPT \ge |I_h|$. According to our algorithm, two sensors are deployed at top and bottom right corners for each road segments in $I_h$. Therefore, in total $2|I_h|$ sensors are deployed. Hence, number of sensors used in our algorithm is $ |Q_h|= 2|I_h| \le 2OPT$. Sorting all the road segments in $R_h$ takes $O(nlogn)$ time. Thereafter, our algorithm finds for each road $r_i \in I_h$ set of road segments in $R_h$ which are intersecting the right-cap $RCap(r_i,\rho)$. This step requires $|I_h| \times |R_h| = O(n^2)$ time. Therefore, total time complexity of our algorithm is $O(n^2)$.
\end{proof}

\begin{theo}
All axis parallel road segments in $R$ can be independent road covered in $O(n^2)$ time and the number of sensors used is $\le 4 OPT$
\end{theo}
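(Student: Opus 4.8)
The plan is to mirror the argument of Theorem~\ref{th:gg}, replacing the four-sensors-per-representative construction by the two-sensors-per-representative construction of the previous subsection. First I would split $R$ into the horizontal segments $R_h$ and the vertical segments $R_v$, run the side-boundary algorithm on $R_h$ to obtain a representative set $I_h \subseteq R_h$ and a sensor set $Q_h$ with $|Q_h| = 2|I_h|$, and run the analogous algorithm on $R_v$ to obtain $I_v \subseteq R_v$ and $Q_v$ with $|Q_v| = 2|I_v|$. The output is $Q = Q_h \cup Q_v$, which independent road covers all of $R$ by the preceding lemma applied in each orientation.

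For the approximation ratio I would establish the lower bound $OPT \ge \max(|I_h|, |I_v|)$. The key point, which is exactly what the proof of the preceding lemma supplies via Lemma~\ref{lem:ee} and Lemma~\ref{lem:ff}, is that no single sensor can independent road cover two distinct segments of $I_h$: if $r_i$ precedes $r_j$ in $L_h$ and one sensor covers both, Lemma~\ref{lem:ee} forces a side boundary of $r_j$ to meet $C(r_i,\rho)$, hence (since $r_i$ has the leftmost right end boundary among the segments still in $L_h$) to meet $RCap(r_i,\rho)$, so $r_j$ would already have been discarded in the iteration that selected $r_i$. Thus covering $R_h$ alone requires at least $|I_h|$ sensors, and since any feasible solution for $R$ in particular covers $R_h$, we get $OPT \ge |I_h|$; symmetrically $OPT \ge |I_v|$. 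Combining, the algorithm uses $|Q| = |Q_h| + |Q_v| = 2(|I_h| + |I_v|) \le 4\max(|I_h|,|I_v|) \le 4\,OPT$.

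For the running time I would note that each phase sorts its segments in $O(n\log n)$ time and then, for every chosen representative, scans the remaining list to discard the segments whose side boundaries meet its right-cap; this costs $O(|I_h|\cdot|R_h|) = O(n^2)$ per phase, hence $O(n^2)$ overall.

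The main obstacle is not the arithmetic but keeping the lower bound airtight: one must be careful that only the ``$\max$'' bound is available (a sensor in an optimal solution could help cover both a horizontal and a vertical segment, so the per-orientation bounds cannot simply be added), and that the pairwise incompatibility of the representatives in $I_h$ genuinely follows from the discarding rule of the algorithm. Both of these facts are already delivered by Lemma~\ref{lem:ee}, Lemma~\ref{lem:ff}, and the preceding lemma, so the proof reduces to assembling them in the same pattern as Theorem~\ref{th:gg}.
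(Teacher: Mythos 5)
Your proposal is correct and takes essentially the same route as the paper: the paper likewise runs the two-sensor side-boundary construction separately on $R_h$ and $R_v$, sets $Q = Q_h \cup Q_v$ with $|Q_h| = 2|I_h|$ and $|Q_v| = 2|I_v|$, and reuses the bound $OPT \ge \max(|I_h|,|I_v|)$ from Theorem~\ref{th:gg} to conclude $2(|I_h|+|I_v|) \le 4\,OPT$ in $O(n^2)$ time. Your explicit justification that no two members of $I_h$ can share a covering sensor (via Lemma~\ref{lem:ee} and the discarding rule) is in fact spelled out more carefully than in the paper, which simply defers to the preceding lemma and to the proof of Theorem~\ref{th:gg}.
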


\begin{proof}
Similar algorithm for covering horizontal road segment is applicable for covering vertical road segments in $R_v$ . The sensor used for covering vertical road segments are stored in $Q_v$, where $|Q_v|= 2 |I_v|$.  Therefore, total number of sensors used for covering all axis parallel road segment in $R$ is $Q= Q_h+Q_v$. The rest of the proof is almost similar to the proof of Theorem \ref{th:gg}.
\end{proof}

%%%%%%%%%%%%%% FUTURE Work %%%%%%%%%%%%%%%%%%%%%%%%%%%%
%\section{Sensors Deployment  Algorithm for General Road Network}
%\label{sec:arbi_deployment_scheme}

\section{Simulation Results}
\label{sec:simulation}
In this section, we study the performance of our proposed algorithms for independent road coverage problem through simulation. We have designed simulator in MATLAB to implement our proposed algorithms. We compare solutions returned by the two approximation algorithms. For simplicity, we have considered only horizontal road segments throughout the simulation. 

Sample output of our simulator for independent road coverage are shown in Figure \ref{fig:result1}, \ref{fig:result2}. Figure \ref{fig:result1} shows the deployment of sensors of our simulator, where sensors are allowed to place at arbitrary positions. While Figure \ref{fig:result2} shows deployment of sensors on the side boundary of the road segments. In both figures, an instance of $20$ horizontal road segments is considered and sensing radius of the sensors are set to $75$.

\begin{figure}[t]
\centering
\includegraphics [width=8cm]{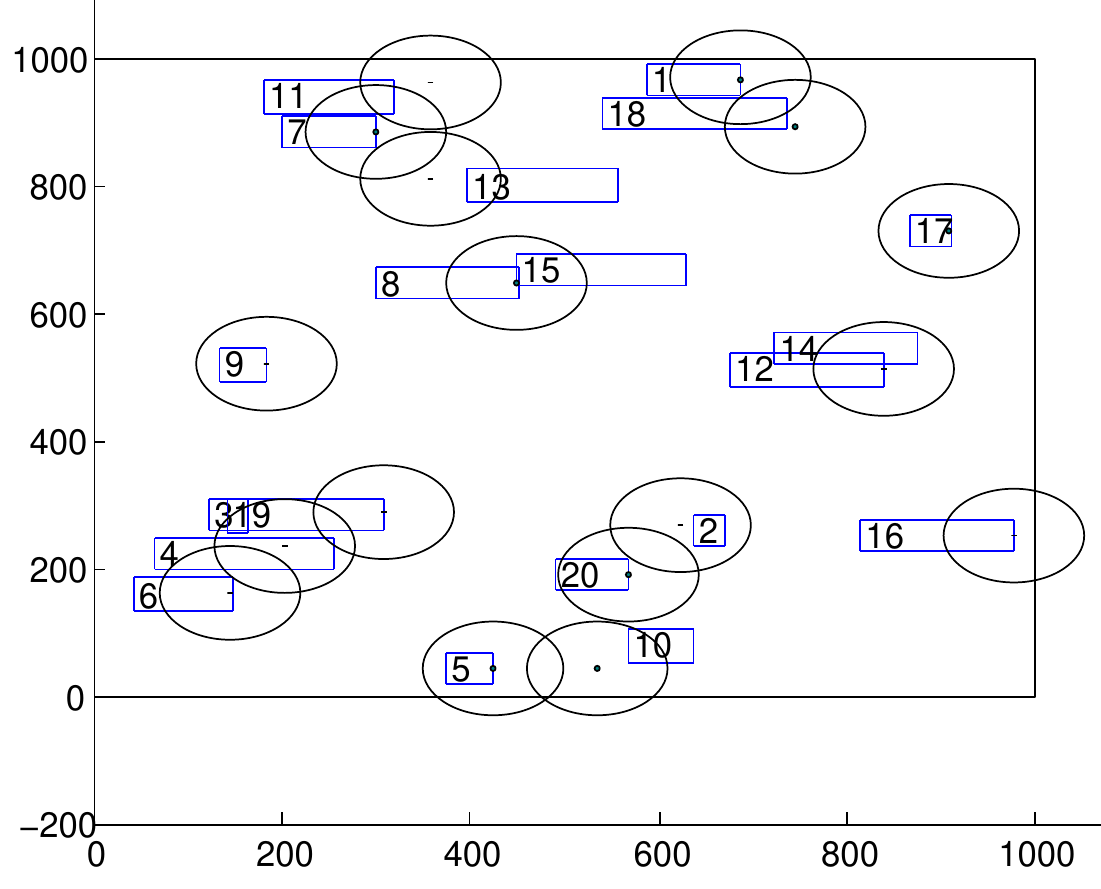}
\caption{An output of our simulator for road coverage where sensors can be deployed at any arbitrary position }
\label{fig:result1}
\end{figure}

\begin{figure}[t]
\centering
\includegraphics [width=8cm]{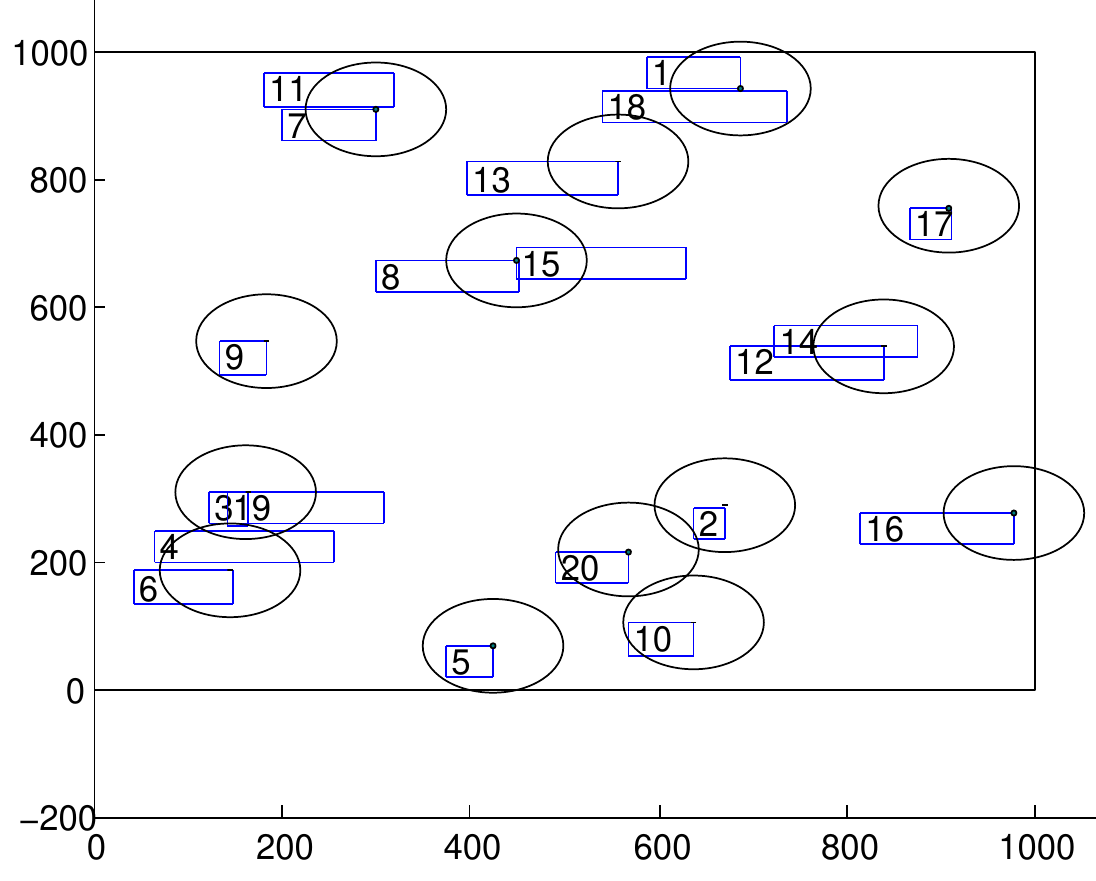}
\caption{An output of our simulator for road coverage where sensors can be deployed only on the side boundary }
\label{fig:result2}
\end{figure}

We have simulated both algorithms discussed in section \ref{sec:deployment_scheme} for finding suitable sensor deployment strategy for independent road coverage. We are referring the algorithm where there is no restriction on sensors positions as {\bf arbitrary}. Similarly in {\bf side boundary} sensor deployment scheme,  sensors are deployed only on the side boundary of the road segments. We evaluate how the number of sensors used in the algorithms vary with the number of road segments and sensing range of the sensors. Moreover, lower bounds of the algorithms are also reported beside the actual sensor used. In both algorithms lower bounds are computed by measuring the size of the list $I_h$ for horizontal road segments. During simulations, road segments  are placed randomly within a rectangular region of size $1000 \times 1000$. Width of the road segments are set to $50$.  Length of the road segments are randomly picked and varied within $0$ to $200$. Outputs of the algorithms for different number of road segments with sensing radius  $\rho = 75$ and $\rho = 100$ are shown in Table \ref{tab:tab1} and Table \ref{tab:tab2} respectively. The results reported in the tables are average of $50$ independent runs. Although the approximation ratio of the arbitrary sensor deployment algorithm is higher than side boundary sensor deployment algorithm, but simulations results shows that in practice the difference between them is very less. From the two tables it is observed that as the sensing radius of the sensors are increased, number of sensors requirement decreases proportionally. We evaluate the sensors requirements by varying the number of road segments for $20$, $30$ and $40$. We evaluate the lower bound and actual number of sensors used by our proposed algorithms to independent road cover all the road segments for each instance. From the results it is found that actual sensors used by the algorithms are much lesser than their theoretical estimations (upper bounds).

\begin{table}[!ht]
\centering

    \caption{Sensing Radius $\rho= 75$}     
    \label{tab:table}

    \begin{tabular}{|l|l|l|l|l| }
    \hline
    {\bf Road } & \multicolumn{2} {c|} {\bf  Side Boundary } & \multicolumn{2} {c|} {\bf Arbitrary} \\
     
    {\bf Segments} &  Lower  &   Sensors  &  Lower  &  Sensors \\
                              &  Bound  &  Deployed  &   Bound & Deployed  \\
    
    \hline 
    20        &  14.12 & 14.58  &  10.18 & 16.32   \\
    \hline
   30         & 18.36 &  19.30  & 12.46 &  22.58     \\
    \hline
   40         &  21.66 &  23.94  &  14.06 & 28.48     \\
    \hline
    \end{tabular}
   
    \label{tab:tab1}
\end{table}

%%%%%%%%%%%%%%%%%%%%%%%

\begin{table} [h!]
\centering
\caption{Sensing Radius $\rho= 100$}

    \begin{tabular}{|l|l|l|l|l| }
    \hline
    {\bf Road } & \multicolumn{2} {c|} {\bf  Side Boundary } & \multicolumn{2} {c|} {\bf Arbitrary} \\
     
    {\bf Segments} &  Lower   &  Sensors     &  Lower  &  Sensors \\
                              &  Bound  &  Deployed  &   Bound &  Deployed \\
   \hline

20 & 12.48 & 13.08  & 7.72  & 14.92   \\ 
30 & 15.80 & 17.16  & 9.18  & 19.78   \\ 
40 & 18.08 & 20.94  & 9.88  & 23.42   \\  
\hline
\end{tabular}
\label{tab:tab2}
\end{table}

\section{Conclusion}
\label{sec:conclude}
In this paper, a new coverage measure called {\em road coverage} along with its two variations {\em independent road coverage} and {\em collaborative road coverage} are introduced. Algorithms are proposed to verify  {\em road coverage}. Although we have assumed that the road  segments are rectangular shaped and sensing regions of the sensors are circles but our coverage measuring algorithms work for any quadrilateral shaped road segments as well as any convex shaped sensing regions. Only one requirement is to have an algorithm for finding intersection between the two shapes. The problem of covering a set of road segments with minimum number of sensors is addressed in this paper. The problem is shown NP-hard. An $O(n^2)$ time $8$-factor approximation algorithm is proposed, where the road segments are axis-parallel. We also present an $4$-factor approximation algorithms for a special case, where sensors are allowed deploy only along the side boundaries of the road segments. Experimental results are also presented. Although the approximation factors are high but in practice solutions return by the algorithms are close to their lower bounds.  Developing efficient algorithms for sensor deployment with good approximation factor for collaborative road coverage and  for the general problem, where the road segments are of arbitrary orientation remains an open challenge. A mild variation is to cover at least a constant fraction of the length of each road segment.

%Finally, a data gathering algorithm is proposed which is suitable for the application.

\section{Acknowledgements}
This work is supported by the Science \& Engineering  Research Board, DST, Govt. of India [Grant numbers:  ECR/2016/001035 ];
 
\bibliographystyle{abbrv}
\bibliography{RoadCoverage}

%\FloatBarrier
\appendix {\bf Proof of Lemma \ref{lem:cc} :}

\begin{proof} 

\begin{figure}[h]
\centering
\includegraphics[width=7cm]{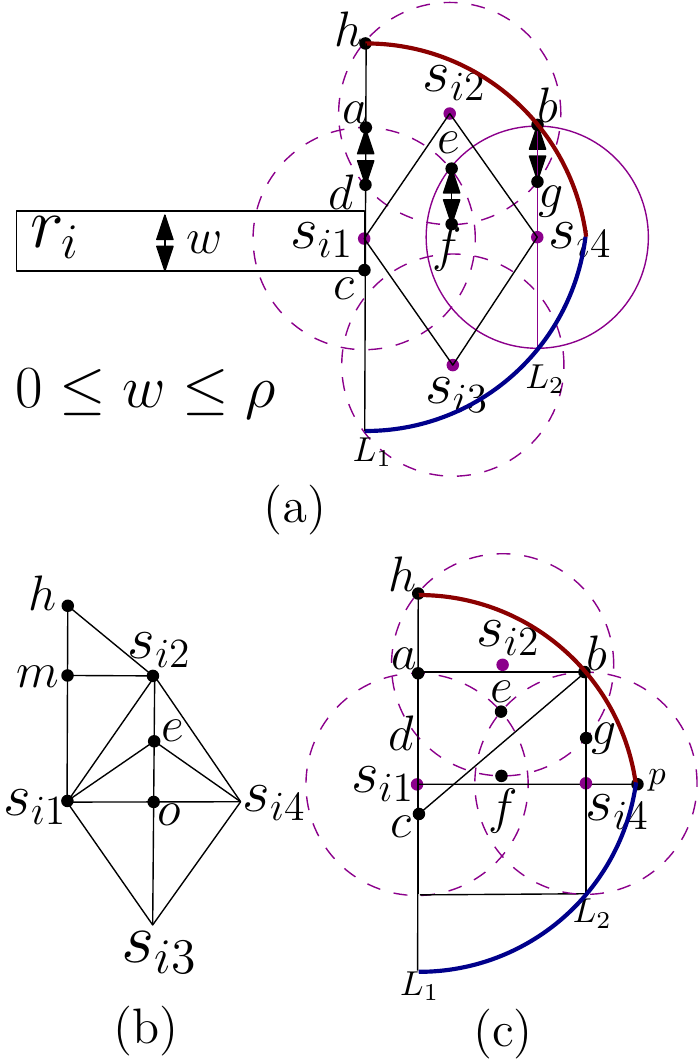}
\caption{ $w$-height independent covering $RCap(r_i,2\rho)$ }
\label{independent_road_cov}
\end{figure}

Four sensors $\{s_{i1}, s_{i2},s_{i3}, s_{i4} \}$  are placed as shown in Figure \ref{independent_road_cov}(a) to $w$-height independent cover right-cap $RCap(r_i,2\rho)$. Sensor $s_{i1}$ is placed at the middle of right end boundary of $r_i$.  And $s_{i4}$ is placed horizontally $\sqrt{(2p)^2 - (p+\frac{w}{2})^2}$ apart from $s_{i1}$. And $s_{i2}$, $s_{i3}$ are placed on the perpendicular bisector of $s_{i1}$ and $s_{i4}$ symmetrically in equal distance from $s_{i1},s_{i4}$.  Let $e$ be an intersecting point between $circle(s_{i1},\rho)$ and $circle(s_{i4},\rho)$, and $f$ denote an intersecting point of $circle(s_{i2},\rho)$ with the perpendicular bisector of $s_{i1}s_{i4}$. The placement of $s_{i2}$ is such that the length of $ef=w$.

 Let $L_1$ denote a vertical line passing through right end boundary of $r_i$,  which intersects $circle(s_{i1}, \rho)$ at $a$. And $h$ and $d$ are  intersection points of $circle(s_{i2}, \rho)$ with $L_1$.  Similarly, let a vertical line $L_2$ passing through $s_{i4}$ intersect $circle(s_{i4},\rho)$ at $b$.  Let $L_2$ intersect $circle(s_{i2},\rho)$ at point $g$ . From Figure \ref{independent_road_cov}(a) and (c), it is obvious that  $ac=\rho+\frac{w}{2}$ and $ab=s_{i1}s_{i4}= \sqrt{(2p)^2 - (p+\frac{w}{2})^2}$.

In Figure  \ref{independent_road_cov}(c), $ bc =\sqrt{ ab^2 + ac^2}= \sqrt{(2p)^2 - (p+\frac{w}{2})^2  +  (p+\frac{w}{2})^2 } = 2\rho$ and hence $b$ is on the perimeter of $RCap(r_i,2\rho)$. 

Therefore, the vertical line $L_2$ passing through $s_{i4}$ intersects $circle(s_{i4},\rho)$ and $RCap(r_i,2\rho)$ at point $b$ and hence $ad=bg$.

 The covering pattern of $RCap(r_i,2\rho)$ by the circles corresponding to the sensors is symmetric with respect to a horizontal line passing through $s_{i1}s_{i4}$. We show that the upper half of right-cap $RCap(r_i,2\rho)$ (region defined by the points $s_{i1},p,b,h$ in Figure \ref{independent_road_cov}(c) ) is $w$-height independent covered by the three circles corresponding to sensors $s_{i1}, s_{i2}$ and $s_{i4}$. Consider intersection region between right-cap of $RCap(r_i,2\rho)$ and sensing circle of $s_{i2}$ :  ($X=RCap(r_i,2\rho) \cap circle(s_{i2}, \rho) )$, if a vertical segment of height $w$ is inside $X$  then it is always under the sensing range of $s_{i2}$. Since $s_{i2}$ is placed on the perpendicular bisector of $s_{i1}s_{i4}$ such that the length of the vertical segment $ef$ is $w$. 

  $\therefore es_{i2} = (fs_{i2} - ef) = (\rho - w ) $ 

 From Figure \ref{independent_road_cov}(c), it is easy to follow that  if $ad, ef, bg \ge w$ and $hc \ge 2\rho$ then the upper half of right-cap $RCap(r_i,2\rho)$  is $w$-height independent covered by the three sensors $s_{i1}$, $s_{i2}$ and $s_{i4}$.  According to the placement of $s_{i2}$, the length of $ef$ is $w$. Therefore, we have to show the length of the remaining three  segments $ad, bg$ and $hc$. In Figure \ref{independent_road_cov}(b), sensors are represented as vertices. A horizontal and a vertical line segments are drawn from $s_{i2}$ which intersect the segment $s_{i1}h$ at $m$, and $s_{i1}s_{i4}$ at $o$ respectively.

$hm = \sqrt{s_{i2}h^2-s_{i2}m^2} = \sqrt{ s_{i2}h^2 - (\frac{s_{i1}s_{i4}}{2})^2 }$ 

$\therefore hm=\sqrt{ \rho^2- \frac{(2p)^2-(p+\frac{w}{2})^2}{4} }= \frac{2\rho + w}{4}$,

The length of the chord $hd$ in Figure \ref{independent_road_cov}(a) is -

$hd= 2hm  = \frac{2\rho + w}{2}$

$oe= \sqrt{s_{i1}e^2 - s_{i1}o^2} = \sqrt{\rho^2 -(\frac{s_{i1}s_{i4}}{2})^2}= \frac{2\rho + w}{4}$

$s_{i1}h = s_{i1}m + hm = os_{i2} + hm = oe + es_{i2} + hm $

$\implies s_{i1}h= oe + hm + es_{i2} = \frac{2\rho + w}{2} + (\rho-w) = 2\rho - \frac{w}{2}$,

$\therefore ad=s_{i1}a+ hd - s_{i1}h = \rho + \frac{2\rho + w}{2} - (2\rho - \frac{w}{2}) =w$

$\therefore hc= s_{i1}h + s_{i1}c= 2\rho -\frac{w}{2}+\frac{w}{2} = 2\rho$ 

Hence, the upper half of right-cap $RCap(r_i,2\rho)$ is $w$-height independent covered by the three sensors $s_{i1}$, $s_{i2}$ and $s_{i4}$. Similarly, it can be shown that lower half of $RCap(r_i, 2\rho)$ is also $w$-height independent covered by $s_{i1}$, $s_{i3}$ and $s_{i4}$. Therefore, $RCap(r_i, 2\rho)$ is $w$-height independent covered by $s_{i1}, s_{i2}, s_{i3}$ and $s_{i4}$.

\end{proof}
\begin{IEEEbiography}[{\includegraphics[width=1in, height=1.25in, clip, keepaspectratio]{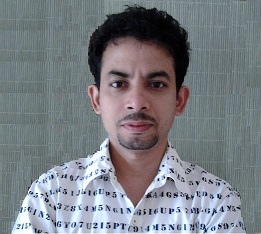}}]{Dr. Dinesh Dash} received Master of Technology in Computer Science and Engineering from University of Calcutta, India in 2004. From 2004 to 2007 he worked as a Lecturer at Asansol Engineering College under West Bengal University of Technology, India. From 2008 to 2012 he worked as a research fellow in the Department of CSE, Indian Institute of Technology Kharagpur, India. His PhD research topics was on coverage problem in Wireless Sensor Network. He was awarded Ph.D. in 2013  from Indian Institute of Technology Kharagpur. He worked as senior research associate from 2013 to 2014 at Infosys Limited, India. From 2013 to 2014 he worked as Assistant Professor at Tezpur University, Assam, India. Since Dec 2014, he is working as an Assistant Professor in the Dept of CSE, NIT Patna. His current work focuses on sensor network coverage problem, data gathering problem, design of fault tolerant system and mobile AdHoc Network.
\end{IEEEbiography}

\end{document}